\newcommand{\SSM}{\mathtt{SSM}}
\newcommand{\Ka}{\mathtt{K81}}
\newcommand{\Kb}{\mathtt{K80}}
\newcommand{\JC}{\mathtt{JC69}}
\newcommand{\B}{\mathcal{B}}
\newcommand{\PP}{\mathbb{P}}
\newcommand{\CC}{\mathbb{C}}
\newcommand{\cc}{\mathcal{C}}
\newcommand{\LL}{\mathcal{L}}
\newcommand{\MM}{\mathcal{M}}
\newcommand{\DD}{\mathcal{D}}
\newcommand{\lra}{\longrightarrow}
\newcommand{\ra}{\rightarrow}
\newcommand{\X}{\mathtt{X}}
\newcommand{\x}{\mathtt{x}}
\newcommand{\y}{\mathtt{y}}
\newcommand{\z}{\mathtt{z}}
\renewcommand{\a}{\mathtt{A}}
\renewcommand{\c}{\mathtt{C}}
\newcommand{\g}{\mathtt{G}}
\renewcommand{\t}{\mathtt{T}}
\newcommand{\im}{\rm Im}
\newcommand{\EE}{\mathbb{E}}
\theoremstyle{definition}
\newtheorem{defi}{Definition}
\newtheorem{ex}[defi]{Example}
\theoremstyle{plain}
\newtheorem{lema}[defi]{Lemma}
\newtheorem{thm}[defi]{Theorem}
\newtheorem{prop}[defi]{Proposition}
\newtheorem{cor}[defi]{Corollary}
\newtheorem{rk}[defi]{Remark}
\newtheorem{notat}[defi]{Notation}
\newtheorem{teo-def}[defi]{Theorem/Definition}
\newenvironment{proofof}[1]{\noindent {\textit{Proof of {#1}.}}}{$\square$ \vspace{3mm}}
\DeclareMathOperator{\rep}{Par}
\title{The space of phylogenetic mixtures for equivariant models}
\author{Marta Casanellas}
\address{Departament de Matemàtica Aplicada I. ETSEIB. Universitat Polit\`ecnica de Catalunya. Avinguda Diagonal 647. 08028 Barcelona. Spain.}
\email{marta.casanellas@upc.edu}
\author{Jesús  Fernández-Sánchez}
\address{Departament de Matemàtica Aplicada I. ETSEIB. Universitat Polit\`ecnica de Catalunya. Avinguda Diagonal 647. 08028 Barcelona. Spain.}
\author{Anna M. Kedzierska}
\address{Centre for Genomic Regulation (CRG). Dr. Aiguader 88. 08003 Barcelona. Spain.}
\thanks{Research of the first and second authors partially supported by Ministerio de Educaci\'on y Ciencia MTM2009-14163-C02-02.}
\begin{document}

\begin{abstract}
\textbf{Background:}

The selection of an evolutionary model to best fit given
  molecular data is usually a heuristic choice. In his seminal book,
  J. Felsenstein suggested that certain linear equations satisfied by
  the expected probabilities of patterns observed at the leaves of a
  phylogenetic tree could be used for model selection. It remained an
  open question, however, whether these equations were sufficient to
  fully characterize the evolutionary model under consideration.

\textbf{Results:} Here we prove that, for most equivariant models of
evolution, the space of distributions satisfying these linear
equations coincides with the space of distributions arising from
 mixtures of trees. In other words, we prove that the evolution
of an observed
multiple sequence  alignment can be modeled by  a mixture of
phylogenetic trees under an equivariant evolutionary model if and only
if the distribution of patterns at its columns satisfies the linear
equations mentioned above.  Moreover, we provide a set of linearly
independent equations defining this space of phylogenetic mixtures for each equivariant model and
for any number of taxa. Lastly, we use these results to
perform a study of identifiability of phylogenetic mixtures.

\textbf{Conclusions:} The space of phylogenetic mixtures under
equivariant models is a linear space that fully characterizes
the evolutionary model. We provide an explicit algorithm to obtain the
equations defining these spaces for a number of models and taxa. Its implementation has proved to be a powerful tool for model selection.

\textit{Keywords:} evolutionary model, equivariant model,
phylogenetic mixture, identifiability.
\end{abstract}

\maketitle

%
%
\section*{Background}

The principal goal of phylogenetics is to reconstruct the ancestral
relationships among organisms. Most popular phylogenetic reconstruction methods are based on mathematical
models describing the molecular evolution of DNA.  In spite of this, there
exists no unified framework for model selection and the results are
highly dependent on the models and methods used in the analysis (cf. \cite{Posada2001}).

In this paper we assume the Darwinian model of evolution proceeding along phylogenetic trees and address the following question: how can
the data evolving under a particular model be
characterized? In other words, we look for invariants of the
DNA patterns which have evolved following a tree (or a mixture of trees, as we will see
below) under a particular model. The answer to this question
provided in this paper leads to a complete characterization of the evolutionary model and to a novel model selection tool, which is valid for any mixture of trees.

In what follows, we briefly explain the motivation for this
  work. It has been shown that  if the evolution along a phylogenetic
  tree is described  by a particular model, the expected probabilities of nucleotide patterns
at the leaves of the tree satisfy certain
equalities (see e.g. \cite[p.375]{Felsenstein2003}). Several authors (e.g. \cite{Felsenstein2003}, \cite{FuLi}, \cite{Steel1992}) pointed out that
these equalities could
potentially be used to test the fitness of the model of base change. The
full set of equations required  for viable model selection, however, was unknown. The objective of this work is to fill in this gap and
to go a step further into practical aplication by providing an
algorithm to compute the required invariants for model selection.


In this work we consider a group of equivariant
models (\cite{Draisma}, \cite{CFS3}). These models
are Markov processes on trees, whose transition matrices satisfy
certain symmetries: the Jukes-Cantor model, the Kimura 2 and 3
parameter models, the strand symmetric model, and the general Markov model. Our
first important result, Theorem~\ref{caractLG}, states that if evolution occurs
according to trees (or even mixtures of trees) under these equivariant models, then the model of
evolution is completely determined by the linear space defined by the aforementioned equalities. By exhaustively
studying the group of symmetries of these models, we also give a
straightforward combinatorial way of determining the equations of this
linear space (see Theorem \ref{teo_indep_equations}). The
implementation of the algorithm producing the equations is available
as a package \texttt{SPIn} (\cite{KDGC},
http://genome.crg.es/cgi-bin/phylo$\_$mod$\_$sel/AlgModelSelection.pl.),
which has proved to be a successful tool in evolutionary model selection.

Our main technique consists in proving that the linear space above
coincides with the space $\DD^{\MM}$ of \textit{phylogenetic
mixtures} evolving under the model $\MM$, i.e. the set of
points that are linear combinations of points lying in the
phylogenetic varieties $CV^{\MM}_T$ (see Preliminaries section for  specific definitions). In biological words and in the stochastic context, this is the
set of vectors of expected pattern frequencies for mixtures of trees evolving under the model
$\MM$ (not necessarily the same tree topology in the mixture, and not necessarily the same transition matrices when the tree topologies coincide). In phylogenetics, the so-called i.i.d. hypothesis  (independent and identically distributed) about the sites of an alignment is prevalent
in the most simple models. When the assumption
``identically distributed'' is replaced it by ``distributed
according to the same evolutionary model'', one obtains a
phylogenetic mixture.

Phylogenetic mixtures are useful in modeling heterogeneous
evolutionary processes, e.g. data comprising multiple genes, selected
codon positions, or rate variation across sites (e.g. \cite{Semple2003}).  Among a plethora of applications, they are used in orthology predictions,
gene and genome annotations, species tree reconstructions, and drug target
identifications.

In addition to the main result, we determine the dimension of these
linear spaces and use it to give an upper bound, $h_0(n)$, on the number of
mixtures that should be used in phylogenetic reconstruction on $n$
taxa.
This relates to the so-called \emph{identifiability} problem in phylogenetic
mixtures, which can be posed as determining the conditions that guarantee that the model
parameters (discrete parameters in the form of tree topologies and the
continuous parameters of the root and model distributions) can be recovered from
the data. Identifiability is crucial for consistency of the maximum
likelihood approaches and, though extensively studied in the
phylogenetic context, few results are known (see for instance \cite{Allman2006},
\cite{APRS}, \cite{Stefanovic},
\cite{RhodesSullivant},\cite{ChaiHousworth}).

In brief, in Theorem \ref{noident} we prove that either the tree topologies or
the continuous parameters are not generically identifiable for
mixtures on more than $h_0(n)$ trees under equivariant models. Here
$h_0(n)$ is the quotient of the dimension of the linear space
$\DD^{\MM}$ (computed in Proposition \ref{dim_models}) by the number
of free parameters of $\MM$ on a trivalent tree plus one. For example,
for four taxa and the Jukes-Cantor model (resp. the Kimura 3-parameter model)
this result proves that mixtures on three (resp. four) or more  taxa are not
identifiable (i.e. either the discrete or the continuous parameters cannot be fully identified).
A detailed discussion on this subject is provided in the last section.

The main tools used in this work are algebraic geometry and group
theory. The reader is referred to \cite{Harris} and \cite{Serre}
for general references on these topics.

\section*{Main text}

\subsection*{Preliminaries}

Phylogenetic trees and Markov models of evolution have been widely
used in the literature. In what follows  we fix the  notation needed to deal with them in our setting.

Let $n$ be a positive integer and denote by $[n]$ the set $\{1, 2, \dots, n\}$.
A \emph{phylogenetic tree} $T$ on the set of taxa $[n]$ is a tree
(i.e. a connected graph with no loops),  whose $n$ leaves are bijectively labeled by $[n]$. Its vertices represent species or other biological entities and its edges represent evolutionary processes between the vertices.

We allow internal vertices of any
degree and if all the internal vertices are of degree 3 we say that the tree is
\emph{trivalent}.
We will denote the set of vertices of $T$ by $N(T)$, the set of edges by $E(T)$, and the set of interior nodes by $Int(T)$.
A \textit{rooted tree} is a tree together with a distinguished node $r$ called the \emph{root}. The root induces an orientation on the edges of $T$, whereby the root represents the common ancestor to all the species represented in the tree. If $e$ is an edge of a rooted tree $T$, we write $pa(e)$ and $ch(e)$ for its parent vertex (origin) and its child  vertex (end), respectively.
Two unrooted phylogenetic trees on the set of taxa $[n]$ are said to
have the \emph{same tree topology} if their labeled graphs have the
same topology.

We fix a positive integer $k$ and an ordered set $B=\{b_1,b_2,\ldots,b_k\}.$ For example, for most applications we take $B=\{\texttt{A},\texttt{C},\texttt{G},\texttt{T}\}$ to be the set of nucleotides in a DNA sequence. We may think of $B$ as the set of states of a discrete random variable.  We call $W$ the complex vector space $W=\langle B
\rangle_{\mathbb{C}}$ spanned by $B$, so that $B$ is a natural basis of $W$. %
For algebraic convenience, we usually work over the complex field and restrict to the stochastic setting when necessary.
Vectors in $W$ are thought of as probability distributions
on the set of states $B$ if their coordinates are non-negative and sum
to one. In this setting   the vector
$\sum c_i b_i$ means that observation $b_i$ occurs with probability $c_i$.
 From now on, we will identify vectors in $W$ with their coordinates in the basis
$B$ written as a column vector, e.g.  we identify $\sum_k b_k$ with the vector $\mathbf{1}=(1,1,\ldots,1)^t\in W$.

In order to model molecular evolution on a phylogenetic tree $T$, we  consider a Markov
process specified by a root distribution, $\pi \in W$,  and a collection of transition matrices,  $\mathbf{A}=\left(A^{e}\right)_{e \in E(T)}$, where each
$A^{e}$ is a $k\times k$-matrix in $End(W).$ The matrices $A^{e}$
represent the conditional probabilities of substitution between the
states in $B$ from the parent node $pa(e)$ to the child node
$ch(e)$ of $e$. We adopt  the convention that the  matrices
$A^{e}$ act on $W$ from the right, i.e. a vector $\omega^t$ in $pa(e)$ maps to $\omega^t A^{e}$ in $ch(e)$.

%
%


Distinct forms of the transition matrices give rise to different
evolutionary models. Using the terminology introduced above, we
proceed to the definition of evolutionary models
used throughout this work.

\begin{defi}\label{em}
An \emph{(algebraic) evolutionary model $\mathcal{M}$} is
specified by giving a vector subspace $W_0\subset W$ such that $\mathbf{1}^t  \pi \neq 0$ for some $\pi$ in $W_0$, together with a
multiplicatively closed vector subspace $Mod$  (for \emph{model}) of $M_k(\CC)$ containing the identity matrix.
We will usually denote such a model by $\MM=(W_0, Mod)$.
We define the  \emph{stochastic evolutionary model $s\MM=(sW_0,sMod)$ associated to $\MM$} by taking  $sW_0=\{\pi\in W_0 : \mathbf{1}^t \pi =1\}$ and $sMod=\{A\in Mod: A\mathbf{1}=\mathbf{1}\}$.
The term ``stochastic'' refers to the fact that, by restricting
 to the points in the spaces with non-negative real
entries, we obtain distributions and Markov matrices.
A phylogenetic tree $T$ together with the parameters  $\pi$ and $\mathbf{A}=\left(A^{e}\right)_{e \in E(T)}$ is said to \emph{evolve under the algebraic evolutionary model} $\MM$ if $\pi \in W_0$, and all matrices $A^e$ lie in $Mod$.
\end{defi}


\begin{rk}\rm
Note that $sW_0$ and $sMod$ are not vector spaces. The condition
$\mathbf{1}^t  \pi \neq 0$ in the above definition means that
the sum of the coordinates of $\pi$ is not zero. Since vectors
in $sW_0$ with non-negative coordinates represent the probability distributions for the set of observations $B$, this
condition implies no restriction from a biological point of
view. Moreover, it ensures that $W_0\cap \{\sum_{\x \in
  B}\pi_{\x}=1\}$ has  dimension equal to $dim(W_0)-1$. In
particular, the simplex of stochastic vectors in
$W_0$ will form a semialgebraic set of $\langle B\rangle_{\mathbb{R}}$ of dimension equal to $dim(W_0)-1$ (as expected).
\end{rk}

\begin{rk}\rm
The subspace $Mod$ of substitution matrices is usually required to be
multiplicatively closed (as in the definition above) so that when two
evolutionary processes are concatenated, the final process is of the
same kind.
The importance of this requirement is the starting point of \cite{LMM}, where a different approach to the definition of ``evolutionary mode'' is provided.
\end{rk}

Our definition of evolutionary models includes most
of the well-known evolutionary models, namely those given in
\cite{Allman2006b} and the \emph{equivariant} models  (see\cite{Draisma,CFS3}).

\begin{ex}\label{def_equivariant}
Let $G$ be a permutation group of $B$, that is, a group whose elements are permutations of the set $B$, $G\leq \mathfrak{S}_k$.
Given $g\in G$, write $P_g$ for
the $k\times k$-permutation matrix corresponding to $g$:
$(P_g)_{i,j}=1$ if $g(j)=i$ and $0$ otherwise.  The
$G$-\emph{equivariant evolutionary model} $\mathcal{M}_G$ is defined by taking
$Mod$ equal to
\begin{eqnarray*}
M(G)=\{A\in M_{k}(\CC)\mid P_{g}AP_{g}^{-1}=A \mbox{ for all } g \in
G\},
\end{eqnarray*}
and $W_0=\{\pi\in W \mid P_{g}\pi=\pi \mbox{ for all } g\in G\}$. These subsets are vector subspaces of $M_k(\CC)$ and
$W$, respectively. Moreover, if $A_1,A_2\in M(G)$, then
\begin{eqnarray*}
 P_{g} A_1A_2 P_{g}^{-1}=(P_{g} A_1 P_{g}^{-1})(P_{g} A_2 P_{g}^{-1})=A_1A_2,
\end{eqnarray*}
and $A_1 A_2\in M(G)$. Therefore, equivariant models provide
a wide family of examples of algebraic evolutionary models in the
sense of Definition \ref{em}. For example, if $B=\{\a,\c,\g,\t\}$,
it can be seen that the algebraic versions of the Jukes-Cantor model \cite{JC69}, the Kimura models with 2 or 3 parameters \cite{Kimura1980,Kimura1981}, the strand symmetric model \cite{CS} or the general Markov model \cite{Chang96} are instances of equivariant models:
 \begin{itemize}
 \item if $G=\mathfrak{S}_{4}$, then ${\MM}_G$ is the \emph{algebraic Jukes-Cantor model} $\JC$,
 \item if $G= \langle (\a\c\g\t),(\a\g) \rangle $, then ${\MM}_G$ is the \emph{algebraic Kimura 2-parameter model} $\Kb$,
\item if $G=\langle (\a\c)(\g\t),(\a\g)(\c\t) \rangle$, then ${\MM}_G$ is the \emph{algebraic Kimura 3-parameter model} $\Ka$,
 \item if $G=\langle(\a\t)(\c\g)\rangle$, then  ${\MM}_G$ is known as the \emph{strand symmetric model} $\SSM$, and
 \item if $G= \langle e \rangle$, then ${\MM}_G$ is the \emph{general Markov model} $\texttt{GMM}$.
 \end{itemize}
\end{ex}

Given an evolutionary model $\MM$ and a phylogenetic tree $T$, we define the \emph{space of parameters} as
\[\rep_{\MM}(T)=W_0\times \left (\prod_{e\in E(T)} Mod \right ).\]
Similarly, we define the space of \emph{stochastic} parameters associated to $T$ by  \[\rep_{s\MM}(T)=sW_0\times \left (\prod_{e\in E(T)} sMod \right ).\]

Though artificial at  first glance, the use of tensors in the framework that includes the
distributions on the set of patterns in $B$ at the leaves of a
phylogenetic tree is a natural choice.
Indeed, if $p_{\x_1 \x_2 \ldots \x_n}$ denotes the joint probability of observing $\x_1$ at leaf 1, $\x_2$ at leaf 2, and so on, up to $\x_n$ at leaf $n$, then the vector $p=(p_{{b_1 \ldots b_1}}, p_{{b_1 b_1 \ldots b_2}}, \dots,p_{b_k\ldots b_k})$ provides a distribution on the set of
patterns in  $B$ at the leaves of $T$, and this can be regarded as the tensor having these coordinates in the natural basis,
\begin{eqnarray*} p=
\sum_{\x_1 \ldots \x_n \in B}p_{\x_1 \ldots \x_n} \x_1 \otimes
\ldots \otimes \x_n.
\end{eqnarray*}
This motivates the following definition.
\begin{defi}
Given a phylogenetic tree $T$ on the set of taxa $[n]$, an
$[n]$-\emph{tensor} is any element of the tensor power
\[\LL:=\otimes_{[n]}W.\]
\end{defi}

Given an algebraic evolutionary model $\MM$ and a phylogenetic tree $T$ with root $r$, every Markov process on $T$ (specified by a collection of parameters $\pi$ and $\mathbf{A}=(A_e)_{e\in E(T)}$) gives rise to a tensor in $\LL$ in the following way: %
we consider a \emph{parametrization}
\begin{eqnarray}\label{param}
\Psi_T^{\MM}: \rep_{\MM}(T)\lra \LL
\end{eqnarray}
defined by
\begin{eqnarray*}
\Psi_T^{\MM}\left(\pi, \mathbf{A}\right)=\sum_{\x_i \in B}p_{\x_1...\x_n}\x_1 \otimes \dots \otimes \x_n,
\end{eqnarray*}
where
\begin{eqnarray}\label{form_param}
p_{\x_1\dots \x_n}=\sum_{\x_v \in B, v \in Int(T)}
\pi_{\x_r}\prod_{e\in E(T)}
A^{e}_{\x_{pa(e)}, \x_{ch(e)}} \, ,
\end{eqnarray}
$\x_v$ denotes the state at the vertex $v$,
$pa(e)$ (resp. $ch(e)$) is the parent (resp. child) node of $e$, and $\pi_{\x}, \x \in
B$, are the coordinates of $\pi$.
When restricted to the stochastic matrices and distributions in $W_0$,
this parametrization corresponds to the hidden Markov process on the tree $T$
(the leaves correspond to the observed random variables and the interior
nodes to the hidden variables).

The parametrization  (\ref{param}) restricts to another polynomial map  $\phi_T^{\MM}:  \rep_{s\MM}(T) \longrightarrow H
$, where $H \subset \LL$  is the hyperplane defined by $H=\left\{ p \in \LL \mid  \sum_{\x_1, \dots, \x_n\in B} p_{\x_1 \dots \x_n}
=1 \right\}.$
Because we work in the algebraic setting, the use of the word ``stochastic'' in this paper is more general than usual, as we only request entries summing to one.

From now on, we will refer to this restriction  as the \emph{stochastic
parametrization}  $\phi_T^{\MM}$. It is important to note that
when we consider the distributions in $sW_0$ and the Markov matrices
in $sMod$, its image by  $\phi_T^{\MM}$ lies in the standard
simplex in $\LL$ (and thus in $H$). This in turn implies that the
whole image ${\im} \, \phi_T^{\MM}$ is contained in $H$.

We proceed to define the algebraic varieties associated to the
parametrization maps defined above. Roughly speaking, algebraic
varieties are sets of solutions to systems of polynomial equations
(e.g. \cite{Harris}).

\begin{defi}
The \emph{stochastic phylogenetic variety $V_T^{\MM}$
associated to a phylogenetic tree }$T$  is the smallest algebraic variety containing ${\im} \, \phi_T^{\MM}=\left\{ \phi_T^{\MM} (\pi_r,\mathbf{A}) : (\pi_r,\mathbf{A})\in  \rep_{s\MM}(T) \right\}$ (in particular, $V_T^{\MM}\subset H$).

Similarly, the \emph{phylogenetic variety $CV_T^{\MM}$ associated
to }$T$ is the smallest algebraic variety in $\LL$ that contains ${\im}
\, \Psi_{T}^{\MM}=\left\{ \Psi_T^{\MM} (\pi_r,\mathbf{A}) :
  (\pi_r,\mathbf{A})\in  \rep_{\MM}(T) \right\}$.
\end{defi}
Below we explain the reason for the notation of $CV_T^{\MM}$, which was adopted from \cite{Allman2004b}.

The reader may note that the position of the root $r$ of $T$ played a role in the above
parameterizations.
It can be shown, however, that under
 certain mild assumptions, ${\im} \, \Psi_T^{\MM}$ and ${\im} \, \phi_T^{\MM}$ are independent of the root position in the following sense: if two phylogenetic trees have the same topology as unrooted trees, then the smallest algebraic varieties containing the corresponding image sets are the same.  
For example, any model $\MM=(W_0,Mod)$  satisfying (i) $\widetilde{\pi}^t:=\pi^t A$ belongs to $W_0$ for all $\pi \in
W_0$ and all $A \in Mod,$ and (ii) $D_{\widetilde{\pi}}^{-1} A^t D_{\pi} \in Mod$ whenever $D_{\widetilde{\pi}}^{-1}$ exists (here $D_{\omega}$ denotes the diagonal matrix with the entries of $\omega$  on the diagonal and zeros elsewhere) has this property (in this case, we say the model is \emph{root-independent}).
It is not difficult to check that the equivariant
models satisfy these two properties (e.g. adapting the proof
of \cite{Allman2003} or \cite{Steel98}). For technical reasons, from
now on we consider only the evolutionary models satisfying (i) and (ii).
Indeed, in this case the notation $CV_T^{\MM}$ refers to the fact that
the phylogenetic variety is just the cone over the stochastic
phylogenetic variety (see Figure \ref{fig_1} and the remark below).

\begin{figure}
\begin{center}
\includegraphics[scale=0.2]{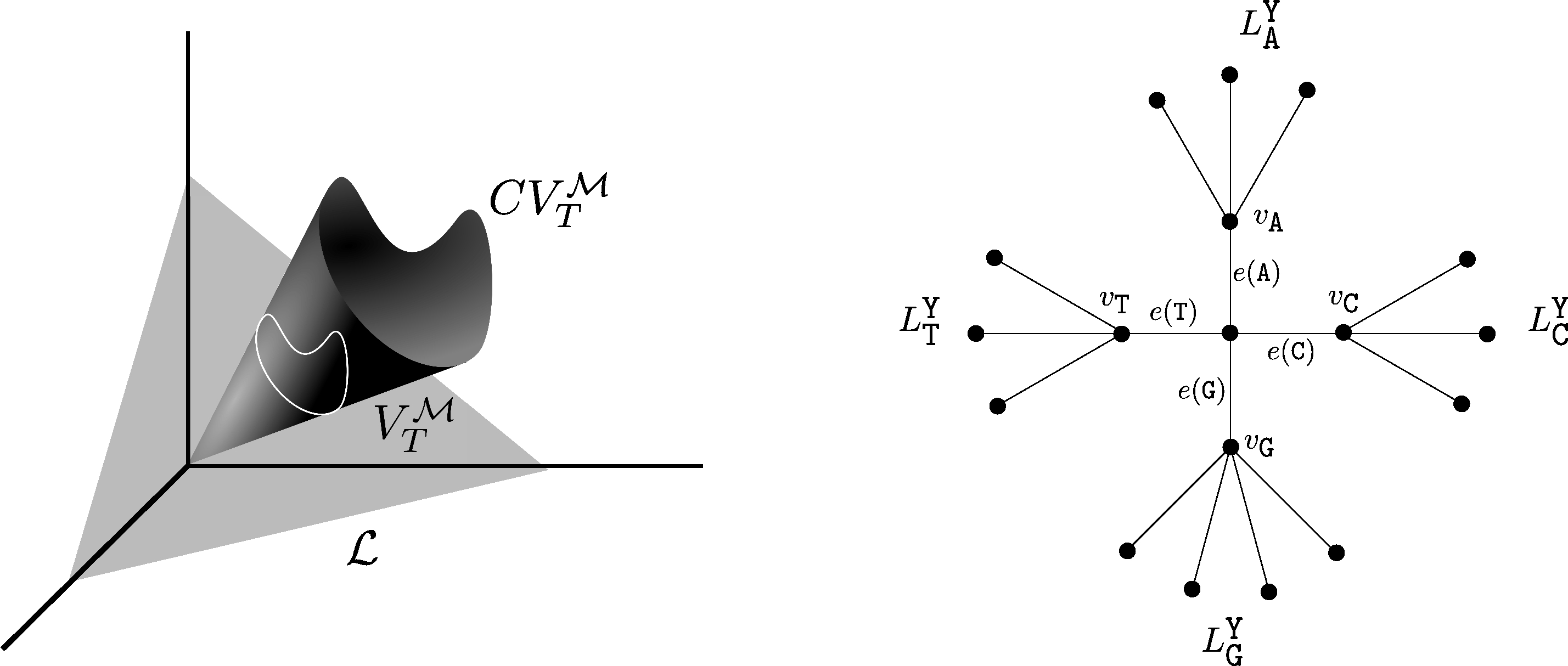}
\end{center}
 \caption{\label{fig_1} On the left, the varieties $V_T^{\MM}$ and $CV_T^{\MM}$ are shown; on the right, the phylogenetic tree described in the proof of Proposition \ref{triv_st} is represented.}
\end{figure}

\begin{rk} \label{homog_parametr}
\rm
Let $\MM$ be an evolutionary model satisfying (i) and (ii) above. For
$p\in \LL$, $p= \sum p_{\x_1 \ldots \x_n} \x_1\otimes \ldots \otimes
\x_n$, define $\lambda(p):=\sum_{\x_i\in B}p_{\x_1 \ldots
  \x_n}$. Then $$CV_T^{\MM}=\left\{p \in \LL | \, p=\lambda(p) q \, , q \in V_T^{\MM}
\right\}$$
and  $ V_ T^{\MM}=CV_T^{\MM}\cap H$.
This is well known for the general Markov model \cite{Allman2004b} and
can be easily generalized to any model satisfying (i) and (ii).
\end{rk}

\subsubsection*{The space of phylogenetic mixtures}

In phylogenetics, the hypothesis that the sites of an alignment
are  independent and identically distributed is often used. When the assumption ``identically
distributed'' is replaced by ``distributed according to the
same evolutionary model'', one obtains a phylogenetic mixture.
Below, we introduce phylogenetic mixtures from the algebraic point of view (see also \cite{MatsenMosselSteel}).

\begin{defi}
Fix a set of taxa $[n]$ and an algebraic evolutionary model $\MM$.
A \emph{phylogenetic mixture (on $m$-classes)} or
\emph{$m$-mixture} is any vector $p \in \LL=\otimes_{[n]}W$ of
the form
\[p=\sum_{i=1}^m \alpha_i p^i,  \]
where $\alpha_i \in \mathbb{C}$ and $p^i\in {\im} (\Psi_{T_i}^{\MM})$ for some tree topologies $T_i$ on the set of taxa $[n]$. As $ \Psi_{T_i}^{\MM}$ is a homogeneous
map, phylogenetic mixtures are  represented by vectors of the form
$\sum_{i=1}^m \check{p}^i$, where $\check{p}^i\in
{\im}(\Psi_{T_i}^{\MM}) $.
We call $\DD_{\MM} \subset \LL$ the \emph{space of all phylogenetic
mixtures} (on any number of classes) under the algebraic
evolutionary model $\MM$.%

As mentioned in the introduction, the tree topologies
  contained in the mixture can be the same or different. An example of a
  phylogenetic mixture is the data modeled by the discrete
  Gamma-rates models (see e.g.
\cite{Semple2003}).

Restricting matrix rows to sum to one requires
restricting the phylogenetic mixtures to the points of the form
\[q=\sum_{i=1}^m \alpha_i q^i \quad \textrm{ where } \quad q^i\in {\im} (\phi_{T_i}^{\MM}) \, , \textrm{ and } \, \sum_i \alpha_i=1.\]
We call $\DD_{s\MM}$ the space of \emph{stochastic phylogenetic
mixtures}.
\end{defi}

\begin{rk}\rm
The phylogenetic variety of a trivalent tree topology contains all
phylogenetic varieties of the non-trivalent tree topologies obtained
by contracting any of its interior edges. Indeed, the latter are a
particular case of the former when the matrices associated to the contracted edges are equal to the
identity matrix. It follows that the space of phylogenetic
mixtures on the trivalent tree topologies coincides with the space of phylogenetic mixtures on all possible topologies.
\end{rk}

The following result was proven by Matsen, Mossel  and Steel in
\cite{MatsenMosselSteel} for the two state random cluster model but, as proved below, it can be easily generalized to any  evolutionary model.

\begin{lema}\label{DsmDm}
Given a set of taxa $[n]$ and an algebraic evolutionary model
$\MM$, the set of all phylogenetic mixtures $\DD_{\MM}$ is a
vector subspace of $\LL$. Similarly, $\DD_{s\MM}$ is a linear variety and it equals $ \DD_{\MM} \cap H$.
\end{lema}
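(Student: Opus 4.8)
The plan is to prove the two assertions separately, starting with the claim that $\DD_{\MM}$ is a vector subspace. The key observation is that each parametrization map $\Psi_{T_i}^{\MM}$ is \emph{homogeneous}: scaling the root distribution $\pi$ by a scalar $c$ scales $\Psi_{T_i}^{\MM}(\pi,\mathbf{A})$ by $c$ as well, since $\pi$ appears linearly in the formula (\ref{form_param}). Thus for any $\alpha_i\in\CC$, the scaled vector $\alpha_i p^i$ again lies in $\im(\Psi_{T_i}^{\MM})$ (replacing $\pi$ by $\alpha_i\pi$, which stays in $W_0$ since $W_0$ is a vector space). This is precisely why, as noted in the definition, a mixture $\sum_i\alpha_i p^i$ can be rewritten as $\sum_i\check{p}^i$ with each $\check{p}^i\in\im(\Psi_{T_i}^{\MM})$. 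With this in hand, closure under scalar multiplication and under addition are both immediate: a scalar multiple $\lambda\sum_i\check p^i=\sum_i(\lambda\check p^i)$ is again a sum of such images, and the sum of two mixtures is just a longer mixture. Hence $\DD_{\MM}$ is closed under the vector space operations and contains $0$, so it is a subspace.

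For the second assertion I would show $\DD_{s\MM}=\DD_{\MM}\cap H$ by double inclusion. The inclusion $\DD_{s\MM}\subseteq\DD_{\MM}\cap H$ is the easier direction: a stochastic mixture $q=\sum_i\alpha_i q^i$ with $q^i\in\im(\phi_{T_i}^{\MM})$ and $\sum_i\alpha_i=1$ lies in $\DD_{\MM}$ because each $\im(\phi_{T_i}^{\MM})\subseteq\im(\Psi_{T_i}^{\MM})$, and it lies in $H$ because each $q^i\in H$ (as noted after the stochastic parametrization was defined) and the coefficients sum to one, so $\lambda(q)=\sum_i\alpha_i\lambda(q^i)=\sum_i\alpha_i=1$.

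The reverse inclusion $\DD_{\MM}\cap H\subseteq\DD_{s\MM}$ is where the real work lies, and it is the step I expect to be the main obstacle. Take $p=\sum_i\check p^i\in\DD_{\MM}$ with $\check p^i\in\im(\Psi_{T_i}^{\MM})$ and $\lambda(p)=1$. The plan is to normalize each summand using the homogeneity and the cone structure recorded in Remark \ref{homog_parametr}: setting $\lambda_i:=\lambda(\check p^i)$, whenever $\lambda_i\neq 0$ the vector $q^i:=\lambda_i^{-1}\check p^i$ satisfies $\lambda(q^i)=1$, so $q^i\in CV_{T_i}^{\MM}\cap H=V_{T_i}^{\MM}$, and I would argue it actually lies in the image $\im(\phi_{T_i}^{\MM})$ (this is where root-independence and the precise relationship between $\im(\Psi_T^{\MM})$ and its cone over $\im(\phi_T^{\MM})$ are used). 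Then $p=\sum_i\lambda_i q^i$ with $\sum_i\lambda_i=\lambda(p)=1$, exhibiting $p$ as a stochastic mixture. The delicate point is handling summands with $\lambda_i=0$: such a $\check p^i$ cannot be normalized to lie in $H$, so I must absorb it into the other terms or show it can be re-expressed. The cleanest route is to pair each degenerate summand with a genuine stochastic point by writing $\check p^i = q - (q-\check p^i)$ for a fixed $q^0\in\im(\phi_{T_i}^{\MM})$ with coefficient adjustments that preserve the total $\lambda$-weight $1$; concretely, one replaces a zero-weight term and a nonzero-weight term by two stochastic images with coefficients summing correctly, using that $\im(\phi_{T_i}^{\MM})\neq\emptyset$. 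Verifying that these manipulations keep all points inside the relevant images and keep the coefficients real and summing to one is the technical heart of the argument.
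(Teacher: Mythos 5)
Your proof of the first claim and of the inclusion $\DD_{s\MM}\subseteq\DD_{\MM}\cap H$ is fine and essentially matches the paper. The gap is in the reverse inclusion, exactly where you announce the ``real work'': you try to show that each normalized summand $\lambda_i^{-1}\check p^i$ lies in the \emph{image} $\im(\phi_{T_i}^{\MM})$. That claim is both unprovable in general and unnecessary. It is unprovable because $\im\,\Psi_T^{\MM}$ is not literally the cone over $\im\,\phi_T^{\MM}$ (dividing a point of $\im\,\Psi_T^{\MM}$ by its coordinate sum does not let you rescale the transition matrices into row-stochastic ones, since matrices in $Mod$ need not have constant row sums), and $V_T^{\MM}$ is only the Zariski closure of $\im\,\phi_T^{\MM}$, which may be strictly larger than the image. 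The paper's way around this is a step you skip: $\DD_{s\MM}$ is by definition the affine hull of $\bigcup_T \im\,\phi_T^{\MM}$, hence a linear --- in particular algebraic --- variety containing each $\im\,\phi_T^{\MM}$, and therefore it contains each closure $V_T^{\MM}$; consequently it already contains every affine combination $\sum_i\alpha_i q_i$ with $q_i\in V_{T_i}^{\MM}$ and $\sum_i\alpha_i=1$ (and, dually, $\DD_{\MM}$ contains each cone $CV_{T_i}^{\MM}$). With that observation it suffices to land in the varieties $V_{T_i}^{\MM}$, which your normalization does achieve.

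The second problem is your treatment of summands with $\lambda_i=0$. The proposed pairing $\check p^i=q-(q-\check p^i)$ cannot work as written: neither $q+\check p^i$ nor $q-\check p^i$ has any reason to lie in $V_{T_i}^{\MM}$ or in $\im(\phi_{T_i}^{\MM})$, since these sets are not closed under addition. But no repair is needed. Remark \ref{homog_parametr} states that every $p^i\in CV_{T_i}^{\MM}$ can be written as $p^i=\lambda(p^i)\,q_i$ with $q_i\in V_{T_i}^{\MM}$, which is exactly what the paper applies; a summand with $\lambda(p^i)=0$ then contributes nothing either to the point or to the coefficient sum, so it can simply be dropped, and the remaining coefficients $\lambda(p^i)$ still sum to $\lambda(p)=1$.
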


\begin{proof}
$\DD_{\MM}$ is a $\mathbb{C}$-vector space and $\DD_{s\MM}$ is a linear variety by their definition.
It follows that $\DD_{\MM}$ is an algebraic variety that contains ${\im} \, \Psi_{T}^{\MM}$ for any phylogenetic tree $T$ on the set of taxa $[n]$. Therefore, it also contains $CV_{T}^{\MM}$, and
$\DD_{\MM}$ equals the set of points of the form $p=\sum p_i$, where $p_i\in CV_{T_i}^{\MM}$. Similarly,  $\DD_{s\MM}$ is an algebraic
variety that contains ${\im} \, \phi_{T}^{\MM}$, so it also contains $V_{T}^{\MM}$ for any phylogenetic tree $T$. It follows that
$\DD_{s\MM}$ is formed by points of type $q=\sum \alpha_i q_i$, where $q_i\in V_{T_i}^{\MM}$ and ${\sum_i \alpha_i=1.}$

Now we check that $\DD_{s\MM}=\DD_{\MM}\cap H$. Let $q\in   \DD_{s\MM}$, so that
$q=\sum_{i=1}^m \alpha_i q^i$ for some $m$,  $q^i \in V_{T_i}^{\MM}$, and
$\sum \alpha_i=1$. Clearly, $q\in \DD_{\MM}$. Moreover,
the sum of coordinates of $q$, $\lambda(q)$, satisfies
$\lambda(q)=\sum_i \alpha_i \lambda(q^i)=\sum_i \alpha_i =1$.
Thus, $q\in H$.
Conversely, let $p= \sum_{i=1}^m p^i $ with $p^i\in
CV_{T_i}^{\MM}$ for certain tree topologies $T_i$, and assume that
$\lambda(p)=1$. Apply Remark  \ref{homog_parametr} to
each $p^i$ to get $p^i=\lambda(p^i) q_i$ for some $q_i \in
V_{T_i}^{\MM}$. Then
\begin{eqnarray*}
 p=\sum_i p^i=\sum_i \lambda(p^i)q_i
\end{eqnarray*}
and $1=\lambda(p)=\sum_i \lambda(p^i)\lambda(q_i)=\sum_i
\lambda(p^i)$ since each $q_i$ lies on $H$. This proves that $p\in \DD_{s\MM}$.
\end{proof}

\begin{rk}\rm
 In the proof of the above lemma, we have seen that  $\DD_{\MM}$
 and $\DD_{s\MM}$ can be alternatively described as the spaces
 of mixtures obtained from the respective varieties $CV^{\MM}_{T}$ and
 $V^{\MM}_{T}$ (i.e., not only from the images of the parametrization
 maps).
\end{rk}

\subsection*{The space of phylogenetic mixtures for equivariant
evolutionary models}\label{secteq}

This section  provides a precise description of the
space $\DD_{\MM}$ for the equivariant models $\MM$ listed in Example
\ref{def_equivariant} ($\JC$, $\Kb$, $\Ka$, $\SSM$, and
$\texttt{GMM}$). First, we recall some
definitions and facts of group theory and linear representation theory.  From now on, $B=\{\a,\c,\g,\t\}$,  $k=4$, $W=\langle B \rangle_{\mathbb{C}}$, $n$ is fixed and $\LL=\otimes_{[n]} W$.

\subsubsection*{Background on representation theory}

We introduce some tools in group representation theory needed in the sequel. We refer the reader to \cite{Serre} as a classical reference for these concepts. Although some of the following results are valid for any permutation group, for simplicity in the exposition we restrict to permutations of four elements (as our applications deal only with the case $B=\{\a,\c,\g,\t\}$).

Let $G\leq \mathfrak{S}_4$ be a permutation group. The trivial element in $\mathfrak{S}_4$ will be denoted as $e.$
We write $\rho_G$ for the restriction to $G$
of the \emph{defining} representation $\rho:\mathfrak{S}_4\ra GL(W)$
given by the permutations of the basis $B$ of $W$. %
This representation induces a $G$-module structure on $W$ by
setting $g\cdot \x:=\rho(g)(\x)\in W$. In fact, $\rho$ induces a $G$-module structure on any tensor power $\otimes^s W$ by setting
\begin{eqnarray}\label{mult_action}
g\cdot \left (\x_{1}\otimes \ldots \otimes \x_{s}\right ):=g\cdot
\x_{1} \otimes \ldots \otimes g\cdot \x_{s},
\end{eqnarray}
and extending by linearity.
From now on, the space $\LL=\otimes^n W$ will be implicitly considered
as a $G$-module with this action.
We call $\chi$ the character associated to the representation $\rho_G:
G \rightarrow GL(W)$, i.e. $\chi(g)$ is the trace of the
corresponding permutation matrix or, in other words, $\chi(g)$ equals
the number of fixed elements in $B$ by the permutation $g\in G$. Then
the character associated to the induced representation $G \rightarrow GL(\otimes^n W)$ is $\chi^n$, the $n$-th power of $\chi$.

We write $N_1, \ldots, N_t$  for the irreducible representations
of $G$ and $\omega_1, \ldots, \omega_t$ for the corresponding
irreducible characters, where $N_1$ and $\omega_1$ will denote the trivial
representation and trivial character, respectively.
Maschke's Theorem applied to the action of $G$ described in (\ref{mult_action}) states that there is a
decomposition of $\otimes^s W$  into its isotypic components:
\begin{eqnarray}\label{Marsche}
\otimes^s W = \oplus_{i=1}^t (\otimes^s W)[\omega_i],
\end{eqnarray}
where each $(\otimes^s W)[\omega_i]$ is isomorphic to  a number of
copies of the irreducible representation $N_i$ associated to
$\omega_i$, $(\otimes^s W)[\omega_i]\cong N_i\otimes
\mathbb{C}^{m_i(s)}$, for some non-negative integer $m_i(s)$ called
the \emph{multiplicity of $\otimes^{s} W$ relative to
$\omega_{i}$}.
The isotypic component of $\LL$ associated to the trivial
representation will be denoted by $\LL^G$ and it is composed of the
$[n]$-tensors invariant under the action of $G$ defined in
(\ref{mult_action}). 
If $\MM$ is the equivariant evolutionary model associated to $G$, $\LL^{G}$ will also be denoted as $\LL^{\MM}$. It is easy to prove that $CV^{\MM}_T\subset \LL^G$ (see Lemma 4.3 of \cite{Draisma}).

We  recall that the set $\Omega_G=\{\omega_i\}_{i=1,\ldots,t}$ of irreducible characters of $G$ forms an orthonormal
basis of the space of characters relative to the inner product
defined by
\begin{eqnarray}\label{in_prod}
\langle f,h\rangle :=\frac{1}{|G|}\sum_{g\in G}f(g)\overline{h(g)}.
\end{eqnarray}

We introduce the following notion.

\begin{defi}\label{notat_B}
An $n$-\emph{word over }$B$ is an ordered sequence $\X=\x_1 \x_2 \ldots \x_n$, where every letter is taken from the alphabet $B$. The set of $n$-words is equivalent to the cartesian power $B^n$ and will be denoted by $\B$.
\end{defi}
Words will be denoted in typewritter uppercase font (like $\X$) and their letters in lowercase (like $\x$).
Sometimes it will be convenient to identify the $[n]$-tensors of the form $\x_1\otimes \ldots \otimes \x_n$ with the $n$-words $\X=\x_1 \ldots \x_n$. Consequently, we will identify $\B$ with the natural basis of $\LL$. 
Given $\X\in \B$, we will denote by $\{\X\}_{G}=\{g \X\mid  g\in G\}$
the $G$-\emph{orbit of} $\X$.  We associate a $G$-invariant tensor, $\tau\{\X\}_G$, to each orbit $\{\X\}_{G}$: $\tau\{\X\}_G :=\sum_{g\in G}g \X.$
It is straightforward to see that every $G$-invariant tensor
can be written as a linear combination of the tensors
$\tau\{\X\}_G$, $\X \in \B$. On the other hand, the set of different
$\tau\{\X\}_G$'s is linearly independent, since the corresponding
$G$-orbits $\{\X\}_G$ have non-overlapping composition of the  elements of $\B$.

\subsubsection*{Mixtures for equivariant models}

For each $\x\in B$, we write $S_G(\x)$ for the \emph{stabiliser of }$\x$ under the action of $G$, that is, $S_G(\x)={\{g\in G: g \cdot \x=\x\}}$. %

\begin{prop}\label{triv_st}
 Let $G$  be a subgroup of $\mathfrak{S}_4$ such that  $S_G(\x_0)=\{e\}$  for some $\x_0\in B$. Then every tensor of type  $\tau\{\X\}_G$, $\X \in \B$, lies in the image of $\Psi_T^{{\MM}_G}$  for some tree topology $T$. In particular, $\LL^{G}\subset \mathcal{D}_{\MM_G}$.
\end{prop}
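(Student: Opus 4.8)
The plan is to realize each invariant tensor $\tau\{\X\}_G=\sum_{g\in G}g\X$ directly as a value of the parametrization $\Psi_T^{\MM_G}$ on a single, very simple tree, and then to deduce the inclusion $\LL^G\subset\DD_{\MM_G}$ by linearity. For the latter reduction note that, as observed just before the statement, every $G$-invariant tensor is a $\CC$-linear combination of the tensors $\tau\{\X\}_G$, $\X\in\B$. Since each such tensor will be shown to lie in ${\im}\,\Psi_T^{\MM_G}\subset CV_T^{\MM_G}\subset\DD_{\MM_G}$, and $\DD_{\MM_G}$ is a $\CC$-vector space by Lemma~\ref{DsmDm}, the inclusion $\LL^G\subset\DD_{\MM_G}$ is immediate once the first assertion is proved.

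For the tree I would take the \emph{star tree} $T$ on $[n]$: a single interior node $r$, taken as the root, joined to the $n$ leaves by edges $e_1,\dots,e_n$ with $pa(e_i)=r$ and $ch(e_i)=i$ (this is a legitimate phylogenetic tree since interior vertices of any degree are allowed). For such a tree the parametrization (\ref{form_param}) collapses to $p_{\x_1\cdots\x_n}=\sum_{\x_r\in B}\pi_{\x_r}\prod_{i=1}^n A^{e_i}_{\x_r,\x_i}$. Writing $\X=\y_1\y_2\cdots\y_n$, I would then choose the root distribution $\pi=\sum_{g\in G}g\x_0$, i.e. the indicator vector of the orbit $\{\x_0\}_G$; this is $G$-invariant, hence $\pi\in W_0$, and because $S_G(\x_0)=\{e\}$ the orbit consists of exactly $|G|$ distinct states, each with coefficient $1$. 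For the edge $e_i$ I would let $A^{e_i}$ be the $0/1$ matrix supported on the $G$-orbit of the pair $(\x_0,\y_i)$ under the diagonal action $g\cdot(\x,\x')=(g\x,g\x')$. By construction $A^{e_i}$ is constant on orbits of pairs, so $A^{e_i}\in M(G)$, and thus $(\pi,\mathbf{A})\in\rep_{\MM_G}(T)$.

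The core of the argument, and the one point where the hypothesis enters, is the following bookkeeping. Since $S_G(\x_0)=\{e\}$, the map $g\mapsto g\x_0$ is a bijection from $G$ onto the orbit, so every row of $A^{e_i}$ indexed by an orbit element $g\x_0$ has a single nonzero entry, in column $g\y_i$ (any pair in the orbit of $(\x_0,\y_i)$ with first coordinate $g\x_0$ must equal $(g\x_0,g\y_i)$). Hence for a root state $\x_r=g\x_0$ one has $\prod_i A^{e_i}_{g\x_0,\x_i}=\prod_i[\x_i=g\y_i]$, which is $1$ exactly when $\x_1\cdots\x_n=g\X$ and $0$ otherwise. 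Summing over the orbit and using the bijection $g\leftrightarrow g\x_0$ yields $p_{\x_1\cdots\x_n}=\sum_{g\in G}[\x_1\cdots\x_n=g\X]$, which is precisely the coefficient of $\x_1\otimes\cdots\otimes\x_n$ in $\tau\{\X\}_G$. Therefore $\Psi_T^{\MM_G}(\pi,\mathbf{A})=\tau\{\X\}_G$, as required.

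I expect the only genuine subtlety to be this well-definedness. If $\x_0$ had a nontrivial stabiliser, the orbit of $(\x_0,\y_i)$ could meet a single row $g\x_0$ in several columns (equivalently, the assignment $g\x_0\mapsto g\y_i$ would be ambiguous), the product over the edges would no longer isolate one group element, and the computed coefficients would fail to reproduce the orbit sum. The freeness condition $S_G(\x_0)=\{e\}$ is exactly what forces the clean one-nonzero-entry-per-row structure that makes the star-tree parametrization output $\tau\{\X\}_G$. The remaining verifications ($\pi\in W_0$, $A^{e_i}\in M(G)$, and the final reduction via Lemma~\ref{DsmDm}) are routine.
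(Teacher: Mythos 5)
Your proposal is correct and follows essentially the same route as the paper: a root distribution supported uniformly on the orbit of $\x_0$, together with $0/1$ equivariant matrices supported on the orbits of pairs $(\x_0,\y_i)$, with the trivial-stabiliser hypothesis used exactly as you say to force one nonzero entry per relevant row so that the parametrization reproduces $\tau\{\X\}_G$. The only cosmetic difference is that the paper inserts an intermediate node $v_{\z}$ for each distinct letter of $\X$ (joined to its leaves by identity matrices) rather than using the bare star tree, which after contracting the identity edges is exactly your construction.
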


\begin{proof}
For any $G$-orbit $\{\texttt{Y}\}_G$, $\texttt{Y} \in \B$, write  $\tau\{\texttt{Y}\}_G=y_{1} \otimes \ldots \otimes  \y_{n}+\sum_{g\neq e} g\cdot
\y_{1} \otimes \ldots \otimes g\cdot \y_{n}$. We will explicitly
associate a tree topology and parameters $(\pi,\mathbf{A})$ to it so that the tensor $\tau\{\texttt{T}\}_G$ is equal to $\Psi_T^{{\MM}_G}$.
To this aim, we denote by $B(\mathtt{Y})$ the set of letters appearing in $\mathtt{Y}$. Then for every $\z\in B(\mathtt{Y})$, consider the set $L^{\mathtt{Y}}_{\z}=\{i\in [n]: \y_i=\z\}$, so that $\cup_{\z\in B(\mathtt{Y})}L^{\mathtt{Y}}_{\z}=[n]$.

We construct a tree $T$ on the set of taxa $[n]$ in the following way. We join each taxa in $L^{\mathtt{Y}}_{\z}$ to a common node $v_{\z}$ by an edge.  
Then each vertex $v_{\z}$ is joined to the root of the tree (we call it $r$) by an edge that we denote as $e(\z)$ (see figure \ref{fig_1}).
Now, in the edges joining any $v_{\z}$ with some leaf in $L^{\mathtt{Y}}_{\z}$, we consider the identity matrix, while the matrix in $e(\z)$ is defined by taking
\begin{eqnarray*}
 A^{e(\z)}_{i,j}=\left \{ \begin{array}{ll} 1 & \mbox{ if }(i,j)=(h\cdot \x_0, h\cdot \z) \mbox{ for some }h\in G , \\ 0 & \mbox{otherwise.}  \end{array} \right.
\end{eqnarray*}
Finally, if $c$ is the cardinality of $\{x_0\}_G$, define the distribution at the root $\pi=(\pi_{\a},\pi_{\c},\pi_{\g},\pi_{\t})$ by
\begin{eqnarray*}
 \pi_{\z}=\left \{ \begin{array}{cl} \frac{1}{c} & \mbox{ if }\z\in \{\x_0\}_G, \\ 0 & \mbox{otherwise.}  \end{array} \right.
\end{eqnarray*}
It is straightforward to check that these matrices and the vector $\pi$ are $G$-equivariant, so $(\pi,\mathbf{A})\in \rep_{{\MM}_G}(T)$. Now, from (\ref{form_param}) and the definition of $\pi$, we can write
\begin{eqnarray*}
 p_{\x_1\dots \x_n}=\sum_{\substack{g\in G\\
\{\x_{\z}\}_{\z \in B(\mathtt{Y})}\subset B
} }P_{\x_1\dots \x_n}(g,\{\x_{\z}\}_{\z \in B(\mathtt{Y})})
\end{eqnarray*}
where
\begin{eqnarray*}
P_{\x_1\dots \x_n}(g,\{\x_{\z}\}_{\z \in B(\mathtt{Y})})=\pi_{g \cdot \x_0}
\prod_{{\z}\in B(Y)}
\left ( A^{e(\z)}_{g \cdot \x_0,\x_{\z}} \prod_{j\in L^{\mathtt{Y}}_{\x_{\z}}} \delta_{\x_{\z},\x_j} \right )
\end{eqnarray*}
(here $\delta_{a,b}$ stands for the Kronecker delta, i.e.  $\delta_{a,a}=1$, $\delta_{a,b}=0$ if $a\neq b$).
Moreover, from the definition of the matrix $A^{e(\z)}$, we have
\begin{eqnarray*}
 A^{e(\z)}_{g \cdot \x_0,\x_{\z}}=\left \{ \begin{array}{ll} 1 & \mbox{ if }(g  \cdot\x_0,\x_{\z})=(h\cdot \x_0, h\cdot \z) \mbox{ for some }h\in G , \\ 0 & \mbox{otherwise.}  \end{array} \right.
\end{eqnarray*}
The hypothesis $S_G(x_0)=\{e\}$ ensures that $(g \cdot \x_0,\x_{\z})=(h\cdot \x_0, h\cdot \z)$ if and only  if $g=h$.
From this, it becomes clear that
$P_{\x_1\dots \x_n}(g,\{\x_{\z}\}_{\z\in B(Y)})=0$ unless
\begin{enumerate}
 \item $\x_{\z}= g \cdot \z$, for $\z\in B$, and
 \item for each $i\in L_{\z}^{\mathtt{Y}}$, $\x_i$ is equal to $\x_{\z}=g\cdot \z$,
\end{enumerate}
in which case $P_{\x_1\dots \x_n}(g,\{\x_{\z}\}_{\z\in B(Y)})=\pi_{g\cdot x_0}=\frac{1}{c}$.
It follows that
\begin{eqnarray*}
p_{\x_1\dots \x_n}=
\left \{ \begin{array}{cl} 1 & \mbox{ if } \x_1\dots \x_n\in \{\mathtt{Y}\}_G , \\ 0 & \mbox{otherwise,}  \end{array} \right.
\end{eqnarray*}
and $\Psi_T^{\MM}(\pi,\mathbf{A})=\tau\{\mathtt{Y}\}_G$. Moreover, as the set of $\tau \{\mathtt{Y}\}_{G}$, for $\mathtt{Y}\in \B$, generates the vector space $\LL^G$, the second claim follows.
\end{proof}

\begin{rk}\rm
 The above result is not true if the hypothesis $S_G(\x_0)=\langle e \rangle$ is removed. For example, if $G=\langle (\a\c\g\t),(\a\g) \rangle$ (so that $\MM=\Kb$), then $S_G(\a)=S_G(\g)=\{e,(\c\t)\}$ and $S_G(\c)=S_G(\t)=\{e,(\a\g)\}$. In that case, it can be shown that the $G$-orbit $\{\a\c\g\t\}_G$ is not in ${\im} \Psi_T^{\Kb}$ for any tree topology $T$ with 4 leaves.
\end{rk}

Since the above condition on the group holds for  $G=\mathfrak{S}_4$,
$G=\langle (\a\t)(\c\g) \rangle$, and $G=\langle
(\a\g),(\a\c\g\t)\rangle$, we deduce the following claim.

\begin{cor}\label{partial_models}
If $G$ corresponds to any of the equivariant models \texttt{K81}, \texttt{SSM} or
\texttt{GMM}, we have $\LL^{G}\subset \mathcal{D}_{\MM_G}$.
\end{cor}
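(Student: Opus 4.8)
The plan is to read the corollary as an immediate application of Proposition~\ref{triv_st}. That proposition already delivers the full conclusion $\LL^G\subset\mathcal{D}_{\MM_G}$ as soon as one knows that $S_G(\x_0)=\{e\}$ for \emph{some} $\x_0\in B$. So the only thing left to verify is that each of the three groups $G$ attached to \texttt{GMM}, \texttt{SSM} and \texttt{K81} admits a letter of $B$ with trivial stabiliser; everything else is handled by the proposition (together with Lemma~\ref{DsmDm}, which guarantees $\mathcal{D}_{\MM_G}$ is the vector space spanned by the $\tau\{\X\}_G$). The verification is a short orbit--stabiliser computation, carried out case by case.

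First I would dispose of the two immediate cases. For \texttt{GMM} the group is $G=\langle e\rangle$, so $S_G(\x)=\{e\}$ for every $\x\in B$ by definition. For \texttt{SSM} we have $G=\langle(\a\t)(\c\g)\rangle=\{e,(\a\t)(\c\g)\}$, whose nontrivial element is a fixed-point-free involution; hence it fixes no letter and $S_G(\x)=\{e\}$ for all $\x\in B$.

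For \texttt{K81} I would first make the group explicit. With $G=\langle(\a\c)(\g\t),(\a\g)(\c\t)\rangle$, multiplying the two generators gives $(\a\t)(\c\g)$, so $G=\{e,(\a\c)(\g\t),(\a\g)(\c\t),(\a\t)(\c\g)\}$ is the Klein four-group in which every nontrivial element is a product of two disjoint transpositions exhausting $B$. Since each such element moves every letter, no nontrivial element fixes any $\x\in B$, and again $S_G(\x)=\{e\}$ for every $\x$. Taking, say, $\x_0=\a$ in each of the three cases and invoking Proposition~\ref{triv_st} yields $\LL^G\subset\mathcal{D}_{\MM_G}$, as claimed.

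I do not expect a genuine obstacle: the corollary merely repackages Proposition~\ref{triv_st}, and the only point requiring attention is that the trivial-stabiliser hypothesis selects \emph{exactly} these three models. It is worth recording why the other two are excluded: for \texttt{JC69} the group $\mathfrak{S}_4$ has point-stabiliser $\mathfrak{S}_3$ of order $6$, and for \texttt{K80} the dihedral group $\langle(\a\c\g\t),(\a\g)\rangle$ acts transitively on the square $\a\c\g\t$ with point-stabiliser of order $8/4=2$ (in agreement with the remark following the proposition). Neither meets the hypothesis, which is precisely why the statement is restricted to \texttt{K81}, \texttt{SSM} and \texttt{GMM}.
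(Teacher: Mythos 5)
Your proof is correct and takes essentially the same route as the paper: the corollary is deduced directly from Proposition~\ref{triv_st} by checking that each of the three groups ($\langle e\rangle$ for \texttt{GMM}, the order-two group generated by a fixed-point-free involution for \texttt{SSM}, and the Klein four-group for \texttt{K81}) has trivial point stabilisers. Your explicit case-by-case verification, including the remark on why \texttt{JC69} and \texttt{K80} are excluded, is in fact more careful than the paper's one-sentence justification.
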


In phylogenetics, an \emph{invariant of a phylogenetic tree $T$} is an
equation satisfied by the expected distributions of patterns at the
leaves of $T$, irrespectively of the continuous
parameters of the model $\MM.$ In the algebraic geometry setting, these
are the equations satisfied by all   $p\in CV^{\MM}_T.$
Invariants were introduced by Lake (see \cite{Lake1987}) and Cavender
and Felsenstein (see \cite{Cavender87}).
A \emph{phylogenetic
  invariant of $T$} is an invariant of $T$, which is not
an invariant of all other phylogenetic trees (under the same
model $\MM$). Equivalently, $f$ is a phylogenetic invariant of
$CV^{\MM}_T$ if it is an invariant of $CV^{\MM}_T$ and there exists a
tree topology $T'$ such that $f$ is not an invariant of
$CV^{\MM}_{T'}$. In principle, phylogenetic invariants can be used for tree topology reconstruction purposes.

\begin{rk}\label{linear_inv} \rm
\begin{enumerate}
 \item[(a)]  It can be seen that the condition of trivial stabiliser
   for some element of $B$ given  in Proposition \ref{triv_st} guarantees that all the irreducible representations of $G$ will be present in the decomposition of $W$ into its isotypic components. Then, by using the results of \cite{CFS3}, it follows that the corresponding equivariant model will have no linear phylogenetic invariants. This fact
was already known for the models in the above corollary: see
\cite{Allman2004} for the  \texttt{GMM},  \cite{CS} for the
\texttt{SSM} and \cite{Sturmfels2005} for the \texttt{K81}. Here we
provided an alternative proof based on elementary tools of group theory.

\item[(b)] The models \texttt{JC69} and \texttt{K80} are known to have
  linear phylogenetic invariants, but these are the only linear
  invariants which do not define hyperplanes containing $\LL^G$, as
  can be deduced from \cite{FuLi,Sturmfels2005}. In fact, for these
  two models, the claim of the corollary is still true as stated
  in the following theorem. Nevertheless, we have not been able to provide a unified proof of this fact because of the different properties of the corresponding groups.
There is no description of the space of linear invariants for other
equivariant models not listed in Example \ref{def_equivariant},  so we cannot claim that the result below still
holds.
\end{enumerate}
\end{rk}

\begin{thm}\label{caractLG}
If $\MM_G$ is one of the equivariant evolutionary models
\texttt{JC69}, \texttt{K80}, \texttt{K81}, \texttt{SSM}, or
\texttt{GMM}, then the space of phylogenetic mixtures $\DD_{\MM_G}$ coincides with $\LL^{G}$,
and $\DD_{s\MM_G}$ equals $\LL^{G} \cap H$.
\end{thm}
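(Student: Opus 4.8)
The plan is to establish the two inclusions $\DD_{\MM_G}\subseteq\LL^{G}$ and $\LL^{G}\subseteq\DD_{\MM_G}$ separately, and then read off the stochastic statement from Lemma~\ref{DsmDm}. The inclusion $\DD_{\MM_G}\subseteq\LL^{G}$ is immediate for all five models: we recalled above that $CV_T^{\MM_G}\subseteq\LL^{G}$ for every tree topology $T$ (Lemma~4.3 of \cite{Draisma}), and the discussion following Lemma~\ref{DsmDm} shows that $\DD_{\MM_G}$ is exactly the linear span of the cones $CV_T^{\MM_G}$; since $\LL^{G}$ is a linear subspace, this span is contained in it. For the reverse inclusion when $\MM_G$ is \texttt{K81}, \texttt{SSM} or \texttt{GMM} there is nothing left to prove: these are precisely the models whose group admits an element with trivial stabiliser, so $\LL^{G}\subseteq\DD_{\MM_G}$ is the content of Corollary~\ref{partial_models}. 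Combined with the first inclusion, this already yields $\DD_{\MM_G}=\LL^{G}$ for those three models.

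The genuinely harder cases are \texttt{JC69} and \texttt{K80}, for which no element of $G$ has trivial stabiliser and the explicit construction of Proposition~\ref{triv_st} is unavailable. Here I would argue dually, in the space of linear forms. Let $\ell$ be any linear form on $\LL$ vanishing identically on $\DD_{\MM_G}$. Because each $CV_T^{\MM_G}$ is contained in $\DD_{\MM_G}$, the form $\ell$ vanishes on $CV_T^{\MM_G}$ for \emph{every} tree topology $T$ on $[n]$; that is, $\ell$ is a linear invariant shared by all tree topologies under $\MM_G$. By the description of the linear invariants of these two models recalled in Remark~\ref{linear_inv}(b) (deduced from \cite{FuLi,Sturmfels2005}), the only linear invariants that are \emph{not} tree-specific are those defining hyperplanes containing $\LL^{G}$; the remaining linear phylogenetic invariants, being by definition not invariants of at least one other topology, cannot vanish on the whole of $\DD_{\MM_G}$. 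Hence $\ell$ vanishes on $\LL^{G}$. This proves that the annihilator of $\DD_{\MM_G}$ is contained in that of $\LL^{G}$; as the opposite containment is automatic from $\DD_{\MM_G}\subseteq\LL^{G}$, the two annihilators coincide and therefore $\LL^{G}\subseteq\DD_{\MM_G}$.

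The stochastic statement then follows formally. By Lemma~\ref{DsmDm} we have $\DD_{s\MM_G}=\DD_{\MM_G}\cap H$, and substituting the equality $\DD_{\MM_G}=\LL^{G}$ just obtained gives $\DD_{s\MM_G}=\LL^{G}\cap H$. I expect the \texttt{JC69}/\texttt{K80} step to be the main obstacle: it cannot be reduced to the trivial-stabiliser construction and depends entirely on the external fact that, for arbitrary $n$, the all-topology linear invariants of these two models are precisely the symmetry equations cutting out $\LL^{G}$. Since---as the authors observe---the groups $\mathfrak{S}_4$ and $\langle(\a\c\g\t),(\a\g)\rangle$ differ enough that no uniform verification is apparent, I would anticipate having to treat each of the two models separately, invoking the specific structure of its linear phylogenetic invariants.
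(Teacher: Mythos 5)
Your proposal is correct and follows essentially the same route as the paper's own proof: the containment $\DD_{\MM_G}\subseteq\LL^{G}$ from equivariance of the cones $CV_T^{\MM_G}$, Corollary~\ref{partial_models} for \texttt{K81}, \texttt{SSM} and \texttt{GMM}, and the fact (Remark~\ref{linear_inv}(b), from \cite{FuLi,Sturmfels2005}) that the only linear forms vanishing on all tree topologies under \texttt{JC69} and \texttt{K80} are those cutting out $\LL^{G}$. Your dual formulation via annihilators is merely a rephrasing of the paper's hyperplane argument, and the stochastic statement is obtained from Lemma~\ref{DsmDm} exactly as in the paper.
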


This theorem allows to identify the set of all phylogenetic
mixtures $\DD_{\MM_G}$ with $\LL^{G}$, which is a vector
subspace of $\LL$ whose linear equations are easy to describe.
In other words, $\LL^{G}$ is the smallest linear space containing the  data coming from any mixture of trees
evolving under the model $\MM_G$. One can therefore use
$\LL^{G}$ to select the most suitable model for the given data.
This has been studied in  \cite{KDGC}.

\vspace{3mm}
\begin{proofof}{Theorem \ref{caractLG}}
For equivariant
models we have that $CV^{\MM_G}_T \subset \LL^{G}$ for
any tree $T$. Hence, by Lemma \ref{DsmDm} and the definition of $\DD_{\MM_G}$,
$\DD_{\MM_G}$ is a vector subspace of $\LL^{G}$.

From Corollary \ref{partial_models}, we infer the equality $\LL^{G}= \DD_{\MM_G}$  for the models \texttt{K81}, \texttt{SSM} and \texttt{GMM}. For the other two models, \texttt{JC69} and \texttt{K80}, it remains to prove that there does not exist any hyperplane $\Pi$ containing
$\DD_{\MM_G}$ and not containing $\LL^{G}$.
If such a
hyperplane existed, then it would contain all the
points of $CV_{T}^{\MM_G}$ for any tree topology $T$.
It suffices to prove that for these models there are no homogeneous linear polynomials vanishing on all tree topologies,
except for the linear equations vanishing on $\LL^{G}$. This has been seen in Remark \ref{linear_inv}(b).

The equality $\DD_{s\MM_G}=\LL^{G} \cap H$ follows
immediately from Lemma \ref{DsmDm}  and the first assertion in the
statement of this theorem.
\end{proofof}

\begin{rk}\rm
We are indebted to one of the referees of this paper for pointing out that the preceeding result, as well as the second part of Proposition \ref{triv_st}, can also be inferred from Proposition 4.9 of \cite{Draisma}: under the assumption that the stabiliser of some state is trivial, Draisma and Kuttler show that the star tree is the smallest algebraic variety containing the tensors $\tau \{\X\}_G$, for \emph{pure} tensors $\X$  (that is, tensors of rank 1). It follows that the set of mixtures on the star tree equals the space $\LL^G$.
\end{rk}

\begin{rk}
\rm It is not difficult to check that for $\MM=$ \texttt{K81}, \texttt{SSM} or \texttt{GMM}, $\DD_{\MM}$ coincides with the space of mixtures on the star tree (see also \cite{MatsenMosselSteel}, where the same result is proven for a 2-state model). On the contrary, this is not true for \texttt{JC69} and \texttt{K80} models because in this case the star tree lies in a smaller linear space as a consequence of the existence of phylogenetic linear equations (see Remark \ref{linear_inv}(b)).
\end{rk}

\subsubsection*{Equations for the space $\LL^{G}$}
Our goal here is to compute the dimension of $\LL^{G}$ for the groups associated to the equivariant models listed in Definition
\ref{def_equivariant}, and to list  a set of independent linear equations defining this space.

\begin{prop}\label{dim_models}
Using the notations above,
\begin{itemize}
\item[(i)] $\dim \LL^{\SSM}=2^{2n-1}$,
\item[(ii)] $\dim \LL^{\Ka} =  4^{n-1}$,
\item[(iii)] $\dim \LL^{\Kb}= 2^{2n-3}+2^{n-2}$, and
\item[(iv)] $\dim \LL^{\JC}=\frac{2^{2n-3}+1}{3}+2^{n-2}$.
\end{itemize}
\end{prop}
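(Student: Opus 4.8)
The plan is to identify $\LL^{G}$ as the isotypic component of $\LL=\otimes^{n}W$ attached to the trivial character $\omega_{1}$, and to extract its dimension from character theory. By the Maschke decomposition (\ref{Marsche}) one has $\LL^{G}=(\otimes^{n}W)[\omega_{1}]\cong N_{1}\otimes\CC^{m_{1}(n)}$, and since $N_{1}$ is one-dimensional, $\dim\LL^{G}=m_{1}(n)$. Using the orthonormality of the irreducible characters with respect to the inner product (\ref{in_prod}) and the fact that $\chi^{n}$ is the character of $\otimes^{n}W$, this multiplicity is
\begin{eqnarray*}
\dim\LL^{G}=\langle \chi^{n},\omega_{1}\rangle=\frac{1}{|G|}\sum_{g\in G}\chi(g)^{n},
\end{eqnarray*}
because $\omega_{1}\equiv 1$. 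Recall from the Preliminaries that $\chi(g)$ counts the letters of $B$ fixed by $g$. Equivalently, since the tensors $\tau\{\X\}_{G}$ ($\X\in\B$) form a basis of $\LL^{G}$, this number is the count of $G$-orbits on $\B=B^{n}$, and the displayed identity is precisely Burnside's orbit-counting formula: an $n$-word is fixed by $g$ exactly when each of its letters is, giving $\chi(g)^{n}$ fixed words. This reduces every case to a finite tally of group elements sorted by number of fixed letters.

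With this reduction, cases (i)--(iii) are short. For $\SSM$, $G=\langle(\a\t)(\c\g)\rangle$ has order $2$: the identity contributes $4^{n}$ and the fixed-point-free involution contributes $0^{n}=0$, so $\dim\LL^{\SSM}=\tfrac12 4^{n}=2^{2n-1}$. For $\Ka$, $G$ is the Klein four-group, with the identity contributing $4^{n}$ and the three remaining fixed-point-free involutions contributing $0$; hence $\dim\LL^{\Ka}=\tfrac14 4^{n}=4^{n-1}$. For $\Kb$, $G$ is dihedral of order $8$: the identity gives $4^{n}$, the two diagonal reflections $(\a\g)$ and $(\c\t)$ each fix two letters and give $2^{n}$, while the remaining five elements (the three nontrivial rotations and the two reflections $(\a\t)(\c\g)$, $(\a\c)(\g\t)$) fix nothing. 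This yields $\dim\LL^{\Kb}=\tfrac18\!\left(4^{n}+2\cdot 2^{n}\right)=2^{2n-3}+2^{n-2}$.

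The case $\JC$ with $G=\mathfrak{S}_{4}$ ($|G|=24$) is the one I would treat most carefully, as it needs both a cycle-type census and a nontrivial simplification. Counting fixed letters by cycle type: the identity ($1$ element) has $\chi=4$, the transpositions ($6$ elements) have $\chi=2$, the $3$-cycles ($8$ elements) have $\chi=1$, and the double transpositions together with the $4$-cycles ($3+6=9$ elements) have $\chi=0$. Using $0^{n}=0$ and $1^{n}=1$ for $n\geq 1$,
\begin{eqnarray*}
\dim\LL^{\JC}=\frac{1}{24}\left(4^{n}+6\cdot 2^{n}+8\right)=\frac{2^{2n-3}}{3}+2^{n-2}+\frac{1}{3}=\frac{2^{2n-3}+1}{3}+2^{n-2},
\end{eqnarray*}
which is the stated expression. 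The only genuine obstacle is this final rearrangement --- verifying that $4^{n}/24$, $6\cdot 2^{n}/24$ and $8/24$ recombine into the claimed closed form; integrality for every $n$ is then automatic, since the left-hand side is an orbit count and hence a nonnegative integer. Everything else is routine bookkeeping on the four finite groups.
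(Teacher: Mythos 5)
Your proof is correct and follows essentially the same route as the paper: both identify $\dim\LL^{G}$ with the multiplicity $m_{1}(n)=\langle\chi^{n},\omega_{1}\rangle=\frac{1}{|G|}\sum_{g\in G}\chi(g)^{n}$ and then evaluate this sum group by group, the only cosmetic difference being that you tally elements by their number of fixed letters (with the Burnside orbit-counting gloss) while the paper organizes the same tally into a table of conjugacy classes. All four evaluations, including the rearrangement in the $\JC$ case, check out.
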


\begin{proof}
Let $\MM$ be any equivariant model.
By definition, we know that $\LL^G$ is the isotypic component of $\otimes^n W$ associated to the trivial representation $(\otimes^n W)[\omega_1]$.
Since the dimension of the trivial representation is one, it follows that the dimension of $\LL^{\MM}$  is precisely the multiplicity
$m_1(n)$, i.e. the number of times the
trivial representation appears in the decomposition of $\otimes ^n W$ into isotypic
components. This multiplicity $m_1(n)$ equals (see (\ref{in_prod}))
\begin{eqnarray*}
\langle \chi^n,\omega_1\rangle =\frac{1}{|G|}\sum_{g\in
G}\chi^n(g)\omega_1(g).
\end{eqnarray*}
The proof ends by grouping the elements of $G$ in the conjugacy
classes of $G$ for $\SSM$, $\Ka$, $\Kb$, or $\JC$. Recall that the
conjugacy classes of a group $G$ are the disjoint sets of the form $C(g) = \{h^{-1}gh  : h \in G\}$.  If $C_1,\ldots,C_s$ are the conjugacy classes
for $G$, write $\cc(G)=(|C_1|,\ldots,|C_s|)$ for the $s$-tuple of
their cardinalities, so that $\sum_{i=1}^s |C_i|=|G|$.
Recall that
$\chi^n(g_1)=\chi^n(g_2)$  whenever $g_{1}$ and $g_{2}$ lie in
the same conjugacy class, so we can represent $\chi^n$ by an $s$-tuple
$\chi^n_{\cc(G)}=(t_1,\ldots,t_s)$, where $t_i=\chi^n(g)$ for any $g\in C_i$. Thus, we have $m_1(n)=\frac{1}{|G|}\sum_{i=1
}^s \chi^n(g_i)| C_i| $, where $g_i$ is any element in the conjugacy class $C_i$. The result for $\MM=\SSM$, $\Ka$, $\Kb$, or $\JC$ follows by applying the following table.

\small
\begin{table}[here]
\begin{tabular}{lllll} \label{examples_groups}
$G\leq \mathfrak{S}_{4}$ & $\MM$ & representatives of conj. classes & $\cc(G)$ &  $\chi^n_{\cc(G)}$ \\ \hline

$\langle (\a\t)(\c\g)\rangle $ & $\SSM$ & $\{e,(\a\t)(\c\g)\}$& $(1,1)$ & $(4^n, 0)$  \\

   $\langle (\a\c)(\g\t),(\a\g)(\c\t) \rangle $ & $\Ka$ & $\{e,(\a\t)(\c\g),(\a\c)(\g\t),(\a\g)(\c\t)\}$ &  $(1, 1, 1, 1)$ & $(4^n, 0, 0, 0)$  \\

 $\langle (\a\c\g\t),(\a\g) \rangle $ & $\Kb$ & $\{e,(\a\c)(\g\t),(\a\g)(\c\t),(\a\c\g\t),(\a\g)\}$ & $(1, 2, 1, 2, 2)$ & $(4^n, 0, 0, 0, 2^n)$  \\

  $\mathfrak{S}_{4} $ & $\JC$  & $\{e, (\a\c)(\g\t),(\a\c\g\t), (\a\g),(\a\c\g)\}$ & $(1, 3, 6, 6, 8)$ & $(4^n,0,0,2^n,1)$
\end{tabular}
\end{table}
\end{proof}

Our next goal is to provide a set of independent linear equations
for $\LL^{G}$. Before stating the main result, let us introduce
some useful notation.

\begin{notat}\rm
We consider the following subsets of $\B=B^n$:
\begin{eqnarray*}
\B_0 & = & \{\a\ldots\a,\c\ldots\c,\g\ldots\g,\t\ldots\t \},  \\
\B_{\a\c\mid \g\t} & = &\{\a,\c\}^{n}\cup \{\g,\t\}^{n}, \\
\B_{\a\g\mid \c\t} & = &\{\a,\g\}^{n}\cup \{\c,\t\}^{n}, \\
\B_{\a\t\mid \c\g} & = &\{\a,\t\}^{n}\cup \{\c,\g\}^{n}, \mbox{ and} \\
\B_2 & = &\B_{\a\c\mid \g\t}\cup \B_{\a\g\mid \c\t} \cup \B_{\a\t\mid \c\g}.
\end{eqnarray*}
The set $\B_0$ is composed of all $n$-words with only one letter and it is contained in $\B_{\a\c\mid \g\t}$, $\B_{\a\g\mid \c\t}$, and $\B_{\a\t\mid \c\g}$. Similarly, $\B_2$ is composed of all $n$-words with two letters at
most.
It is straightforward to check that
$|\B_{\a\c\mid \g\t}|=|\B_{\a\g\mid \c\t}|=|\B_{\a\t\mid \c\g}|=2^{n+1}$ and $|\B_2|=3\cdot 2^{n+1}-8$.

We  adopt multiplicative notation for $n$-words, for instance, we write $\c^l$ for the word  
$\underbracket[1pt]{\c\ldots \c}_l$, and $(\a^l)(\g^m)\x_{l+m+1}\ldots
\x_n$ for
$\underbracket{\a\ldots \a}_l\underbracket[1pt]{\g\ldots \g}_m \x_{l+m+1}\ldots \x_n$, where $\x_{l+m+1},\ldots, \x_n$ represent any letters.
\end{notat}

The main result of this section is the following:
\begin{thm}\label{teo_indep_equations}
A set of linearly independent equations $\mathbb{E}^{\MM}$ defining
$\LL^{\MM}$ for $\MM=$ $\JC$, $\Kb$, $\Ka$, or $\SSM$ is given by
\begin{itemize}

\item[$\mathbb{E}^{\SSM}$] : equations $p_{\X}=p_{(\a\t)(\c\g)\X}$ for all $\X \in \B$ with  $\x_1\in \{\a,\c\}$;

\item[ $\mathbb{E}^{\Ka}$] : the equations in $\mathbb{E}^{\SSM}$, and the equations
$\{p_{\X}=p_{(\a\c)(\g\t)\X}\}$
for all $\X\in \B$ with $\x_1=\a$;

\item[$\mathbb{E}^{\Kb}$] : the equations in $\mathbb{E}^{\Ka}$, plus the equations
$\{p_{\X}=p_{(\a\g)\X}\}$
for all $\X\in \B\setminus \B_{\a\c\mid \g\t}$
having $\x_1=\a$ and satisfying the following condition: if
$\t$ appears in $\X$, then there is
a $\c$ in a preceding position;

\item[$\mathbb{E}^{\JC}$] : the equations in $\mathbb{E}^{\Kb}$, together with the equations
$\{ p_{\X}=p_{(\a\t)\X} \}$
for all  $\X\in \B_{\a\c\mid \g\t}\setminus \B_0$ of the form $(\a^{l})(\c^{m})\x_{l+m+1}\ldots\x_n$; plus the equations
$\{ p_{\X}=p_{(\a\c)\X}\}$ and $\{p_{\X}=p_{(\a\t)\X}\}$
for all $\X\in \B\setminus \B_2$ of the form
$(\a^{l})(\c^{m})\x_{l+m+1}\ldots\x_n$ and satisfying the condition: if $\t$ appears in
$\X$, then there is a $\g$ in a preceding position.
\end{itemize}

The number of equations added in each case is
$2^{2n-1}$ for $\SSM$,  $2^{2n-2}$ for $\Ka$, $2^{2n-3}-2^{n-2}$ for $\Kb$, and $2^{n-1}-1 + 2(\frac{2^{2n-3}+1}{3}-2^{n-2})$ for $\JC$.
\end{thm}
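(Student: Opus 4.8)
The plan is to read each equation $p_{\X}=p_{g\X}$ of $\mathbb{E}^{\MM}$ as a \emph{difference relation} between two coordinate functions, i.e. as the linear form $e_{\X}^{*}-e_{g\X}^{*}$ on $\LL$, and to encode the whole system as the edge set of a graph $\Gamma_{\MM}$ on the vertex set $\B=B^{n}$, joining $\X$ to $g\X$ for each listed equation. Two standard facts about such edge-difference systems do all the bookkeeping: the solution space is exactly the space of tensors whose coordinates are constant on the connected components of $\Gamma_{\MM}$, and the rank of the system equals $|\B|$ minus the number of connected components of $\Gamma_{\MM}$ (the rank of the oriented incidence matrix). \textbf{Validity} is then immediate: since every edge joins $\X$ to $g\X$ with $g\in G$, and a $G$-invariant tensor has equal coordinates along each $G$-orbit by the action $(\ref{mult_action})$, every equation of $\mathbb{E}^{\MM}$ vanishes on $\LL^{G}$; equivalently each edge stays inside a single $G$-orbit, so each component of $\Gamma_{\MM}$ lies in one orbit and the solution space always \emph{contains} $\LL^{G}$. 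The entire theorem thus reduces to one combinatorial statement: \textbf{for each model the listed equations connect every $G$-orbit}, i.e. each $G$-orbit is a single connected component of $\Gamma_{\MM}$.

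Granting connectivity, the proof closes cleanly. Because the orbit sums $\tau\{\X\}_G$ form a basis of $\LL^{G}$, one has $\dim\LL^{G}=\#\{G\text{-orbits}\}$; connectivity forces the components of $\Gamma_{\MM}$ to be exactly the orbits, so the solution space is the span of the $\tau\{\X\}_G$, namely $\LL^{G}$, proving that $\mathbb{E}^{\MM}$ defines $\LL^{G}$. For independence, each orbit of size $s$ then needs at least $s-1$ edges to be connected, while the stated cardinalities sum along the chain to $\dim\LL-\dim\LL^{G}=\sum_{\text{orbits}}(s-1)$ by Proposition~\ref{dim_models} (the counts $2^{2n-1}$, $2^{2n-2}$, $2^{2n-3}-2^{n-2}$, and $2^{n-1}-1+2(\tfrac{2^{2n-3}+1}{3}-2^{n-2})$ telescope to the codimension of $\LL^{G}$); hence each orbit carries \emph{exactly} $s-1$ edges and is a tree, $\Gamma_{\MM}$ is a spanning forest of the orbits, and the equations are linearly independent. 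So it suffices to prove connectivity and to verify that the equation counts sum to the codimensions of Proposition~\ref{dim_models} — the latter a routine arithmetic check.

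To prove connectivity I would exploit the subgroup chain $\SSM\subset\Ka\subset\Kb\subset\JC$ — with $(\a\t)(\c\g)=(\a\c)(\g\t)\cdot(\a\g)(\c\t)$, the Klein four group inside the dihedral group inside $\mathfrak{S}_4$ — mirrored by the nesting $\mathbb{E}^{\SSM}\subset\mathbb{E}^{\Ka}\subset\mathbb{E}^{\Kb}\subset\mathbb{E}^{\JC}$, and argue inductively. For $\SSM$, $\sigma=(\a\t)(\c\g)$ fixes no letter, so every orbit has size two and contains a unique word with $\x_1\in\{\a,\c\}$; the single equation attached to that representative spans the orbit, giving the base case. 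Passing from a group $G$ to a larger $G'$, every $G'$-orbit is a union of $G$-orbits already connected by the inductive hypothesis, and the newly added equations must link these suborbits into one tree. The technical heart is that the positional side conditions (``$\x_1=\a$'', ``if $\t$ appears then a $\c$ precedes it'', and the $\B_0,\B_2,\B_{\a\c\mid\g\t}$ restrictions) single out \emph{one} canonical representative per suborbit, so that each added equation performs a single merge without creating a cycle. I would formalize this by fixing, for each model, a normal form for words (a lexicographic-type canonical representative of each orbit) and showing that each added equation strictly decreases a well-founded complexity measure toward that normal form, producing an elimination order that simultaneously certifies connectivity and acyclicity. \textbf{The main obstacle} is exactly this verification for $\Kb$ and $\JC$, where stabilisers are nontrivial, orbit sizes vary, and the bookkeeping is delicate: one must confirm that the conditions select exactly one connecting edge per merge — neither too few (which would disconnect an orbit and enlarge the solution space beyond $\LL^{G}$) nor too many (which would create a cycle and destroy independence) — and that the canonical representatives are well defined and unique across the several overlapping families defining $\mathbb{E}^{\JC}$.
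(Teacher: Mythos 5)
Your graph-theoretic framework is sound, and it is essentially a repackaging of the paper's own argument: both rest on the chain $\SSM\subset\Ka\subset\Kb\subset\JC$, the coset decomposition $\{\X\}_{G}=\bigcup_i\{g_i\X\}_{H}$ for a transversal of $H\setminus G$, and the dimension count of Proposition~\ref{dim_models}. The only organizational difference is which of the two dual properties gets verified combinatorially: the paper checks linear independence directly (each added equation joins two \emph{disjoint} $H$-suborbits inside one $G$-orbit, and the positional conditions guarantee exactly one equation per orbit), then deduces that the system cuts out $\LL^{G}$ because the number of independent equations equals the codimension; you check connectivity of each $G$-orbit first and deduce independence from the same count. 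Given the count, the two are equivalent, so nothing is lost by your choice, and your observation that a duplicate or missing edge would be detected automatically by the count is correct.

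The gap is that the step you yourself flag as ``the main obstacle'' --- that for $\Kb$ and $\JC$ the side conditions select exactly one word per orbit (equivalently, exactly one merging edge per suborbit) --- is where all the content of the theorem lies, and your proposal only gestures at how to do it (``a lexicographic-type normal form and a well-founded complexity measure'') without carrying it out. The paper's verification is concrete: it computes orbit cardinalities (for $\Kb$, every $\X\in\B\setminus\B_{\a\g\mid\c\t}$ has $|\{\X\}_{\Kb}|=8$, giving $(4^n-2^{n+1})/8=2^{2n-3}-2^{n-2}$ orbits; for $\JC$, orbits in $\B\setminus\B_2$ have $24$ elements, giving $\frac{1}{3}(2^{2n-3}+1)-2^{n-2}$ orbits), determines via a stratification of $\B$ into $\B_0$, $\B_{\a\c\mid\g\t}$, $\B_{\a\g\mid\c\t}$, $\B_{\a\t\mid\c\g}$, and $\B\setminus\B_2$ which suborbits $\{g_i\X\}_{H}$ coincide and which are disjoint (this is exactly why $\JC$ needs one extra equation on $\B_{\a\c\mid\g\t}\setminus\B_0$ but two on $\B\setminus\B_2$), and then checks that the positional conditions pick a unique representative in each orbit. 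Note also that your ``routine arithmetic check'' is not separable from this combinatorics: the quantities $2^{2n-3}-2^{n-2}$ and $\frac{1}{3}(2^{2n-3}+1)-2^{n-2}$ appearing in the statement \emph{are} orbit counts, so verifying that the listed families have these cardinalities is the representative-selection argument itself. Until that case analysis is written down, what you have is a correct reduction of the theorem, not a proof of it.
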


Before proving this theorem, we explain how these sets of equations were obtained. Notice that a system of linear equations of $\LL^G$  is given by
\begin{eqnarray*}
 \left \{ p_{g \X}=p_{\X}\mid g\in G ,\; \X\in \B \right \}.
\end{eqnarray*}
The role played by the $G$-orbits on $\B$ becomes apparent. Indeed, the idea
is  to relate the equations to the orbits of a subgroup of $G$. To this aim, let $H$ be a subgroup of $G$ and write $H\setminus G  = \{Hg : g\in G\}$ for the set of right cosets of $H$ in $G$. 
%
%
We consider a transversal of $H\setminus G$, i.e. a collection $\{g_1,\ldots, g_{[G:H]}\}$ such that
$
G=\bigsqcup_{i=1}^{[G: H]} Hg_i.
$
Then the orbit of any $\X\in \B$ can be decomposed as
\begin{eqnarray}\label{orb}
 \{\X\}_{G}=\bigcup_{i=1,\ldots, [G:H]} \{g_i\X\}_{H}.
\end{eqnarray}
This decomposition establishes the connection between the $G$-orbits and the $H$-orbits. 
%
In order to  obtain a system of equations for $\LL^G$, once $\EE^H$ has been computed, it is enough to add the equations involving the permutations in a transversal $\{g_1=e,g_2,\ldots, g_{[G:H]}\}$ of $H\setminus G$:
\begin{eqnarray*}
\left .
\begin{array}{l}
  p_{\X} = p_{g_2 \X}\\
 p_{\X} = p_{g_3 \X}\\
  \dots  \\
 p_{\X} = p_{g_{[G:H]} \X}
\end{array}
\right \} \mbox{for all} \; \X\in \B.
\end{eqnarray*}
Notice that the union in (\ref{orb}) is not necessarily disjoint as
it may happen that $\{g_i \X\}_H=\{g_j \X\}_H$ for $i\neq j$. In this
case, the equality $p_{g_j\X}=p_{g_j \X}$ already holds in the space
$\LL^H$ and does not provide any new restriction. In order to avoid this
situation and obtain a minimal set of equations for $\LL^G$,  we
request the special conditions on the $\X\in \B$ in the statement
of the theorem.
%

%

\begin{proof}
For each model $\MM$, we prove that the corresponding equations are linearly independent and there are as many equations as the codimension of $\LL^{\MM}$. By Proposition \ref{dim_models}, the codimension of $\LL^{\MM}$ is $2^{2n-1}$ for $\SSM$, $3\cdot 4^{n-1}$ for $\Ka$, $7\cdot 2^{2n-3}-2^{n-2}$ for $\Kb$, and $4^n-\frac{2^{2n-3}+1}{3}-2^{n-2}$ for $\JC$.
In the sequel, we refer to the groups by the name of the equivariant model associated to them.

\begin{itemize}
\item[$\SSM$:] 
As $\SSM$ is the group $\{e,(\a\t)(\c\g)\}$, a set of
equations for $\SSM$ is  $\{p_{\X}=p_{(\a\t)(\c\g)\X}\}.$ Fixing
$\x_1$ in $\{\a,\c\}$ we obtain $2^{2n-1}$ linearly independent
equations (equations involving different coordinates). The
codimension of $\LL^{\SSM}$ is equal to $2^{2n-1}$, which coincides
with the number of equations given, and thus this set of equations defines $\LL^{\SSM}.$

\item[$\Ka$:] Since a transversal of $\SSM \setminus \Ka$ is $\left \{
e, (\a\c)(\g\t)\right \}$,
the hyperplanes $p_{\X}=p_{(\a\c)(\g\t)\X}$ contain $\LL^{\Ka}$ but
not $\LL^{\SSM}.$ Moreover,  using (\ref{orb}) we see that the orbit $\{\X\}_{\Ka}$ decomposes into the disjoint union of $\{\X\}_{\SSM}$ and $\{(\a\c)(\g\t)\X\}_{\SSM}$ for any $\X \in \B$. Therefore, the equations given for $\Ka$ involve different coordinates than those in $\EE^{\SSM}$. Requiring $\x_1=\a$, we obtain $4^{n-1}$ linearly independent new equations. Thus $\EE^{\Ka}$ defines the space $\LL^{\Ka}$ because the number of linearly independent equations provided, $2^{2n-1}+4^{n-1}=3\cdot 4^{n-1}$, coincides with the codimension of $\LL^{\Ka}$.

\item[$\Kb$:] The set $\left\{e,(\a\g)\right \}$ is a transversal of $\Ka \setminus \Kb$. In order to show that the equations provided are linearly independent to those of $\EE^{\Ka}$, we apply (\ref{orb}) to this transversal to obtain
$\{\X\}_{\Kb}=\{\X\}_{\Ka}\cup \{(\a\g)\X\}_{\Ka}$. If $\X\notin \B_{\a\g \mid \c\t}$, then
$\{(\a\g)\X\}_{\Ka}$ and $\{\X\}_{\Ka}$ are disjoint, so each equation $p_{\X}=p_{(\a\g)\X}$ is linearly independent from $\EE^{\Ka}$.
The set $\B\setminus \B_{\a\g \mid \c\t}$ has cardinal $4^n-2^{n+1}$ and, if $\X \in \B\setminus \B_{\a\g \mid \c\t}$, each orbit $\{\X\}_{\Kb}$ has cardinality 8. Therefore, the number of different orbits for $\X \in \B\setminus \B_{\a\g \mid \c\t}$ is $(4^n-2^{n+1})/8=2^{2n-3}-2^{n-2}.$  Moreover, the choice of $\X$'s in $\B\setminus \B_{\a\g \mid \c\t}$ with $x_1=\a$ and satisfying ``if $\t$ appears in $\X$, there is
a $\c$ in a preceding position'' guarantees that we take only
one element in each $\{\X\}_{\Kb}$, and thus we are adding
exactly one equation for each of these $\X's.$ Overall, there are
$3\cdot 4^{n-1}+(2^{2n-3}-2^{n-2})=7\cdot 2^{2n-3} -2^{n-2}$ linearly
independent equations in $\EE^{\Kb}$. This number coincides with the
codimension of $\LL^{\Kb}$  and these equations define $\LL^{\Kb}$.

\item[$\JC$:] A transversal of $\Kb \setminus \JC$ is $\left \{e,
(\a\c),(\a\t) \right \}$, therefore (\ref{orb}) applies to give $\{\X\}_{\JC}=\{\X\}_{\Kb}\cup \{(\a\c)\X\}_{\Kb}\cup \{(\a\t)\X\}_{\Kb}$.
\begin{itemize}
\renewcommand{\labelitemii}{$\circ$}
\item if $\X\in \B_{\a\c\mid \g\t}\setminus \B_0$, then
  $\{(\a\c)\X\}_{\Kb}=\{\X\}_{\Kb}$ and $\{\X\}_{\JC}$ is the
  disjoint union of $\{\X\}_{\Kb}$ and $\{(\a\t)\X\}_{\Kb}$. As such, each equation $p_{\X}=p_{(\a\t)\X}$ is linearly independent from $\EE^{\Kb}.$

Moreover, if $\X\in \B_{\a\c\mid \g\t}\setminus \B_0$ is of the form $(\a^{l})(\c^{m})\x_{l+m+1}\ldots\x_n,$ we have  $2^{n-1}-1$ such equations and they are linearly independent.

\item if $\X\in \B\setminus \B_2$ then  the three orbits
$\{(\a\c)\X\}_{\Kb}$, $\{(\a\t)\X\}_{\Kb}$, and $\{\X\}_{\Kb}$ have
8 elements each and are disjoint.
Therefore, for these $\X$'s, each equation of type $\{
p_{\X}=p_{(\a\c)\X}\}$ or $\{p_{\X}=p_{(\a\t)\X}\}$ is linearly independent from $\EE^{\Kb}.$ Moreover, as $\B\setminus \B_2$ has cardinal $4^n-3\cdot 2^{n+1}+8$ and is covered by these orbits, we have $\frac{4^n-3\cdot 2^{n+1}+8}{24}=\frac{1}{3}(2^{2n-3}+1)-2^{n-2}$ different orbits. The restriction to the elements of the form $(\a^{l})(\c^{m})\x_{l+m+1}\ldots\x_n$ and satisfying that ``if $\t$ appears in
$\X$, there is some $\g$ in a preceding position'' guarantees that the
equations are written only only once for each orbit.
\end{itemize}
Summing up, there are $$7\cdot 2^{2n-3} -2^{n-2}+\left (
  2^{n-1}-1+2\left(\frac{1}{3}(2^{2n-3}+1)-2^{n-2}\right) \right)$$
linearly independent equations in $\EE^{\JC}$ that contain
$\LL^{\JC}$. As this number is equal to the codimension
$4^n-\frac{2^{2n-3}+1}{3}-2^{n-2}$ of $\LL^{\JC}$, the proof is
complete.
\end{itemize}
All the equalities among orbits used in this proof are summarized in the following table (where $\dots$ means `the set on the left'  and $"$ means `the set on the top').

\begin{center}

 \begin{tabular}{c||c|c|c|c|c}
 & $\{\X\}_{\mathtt{GMM}}$ & $\{\X\}_{\SSM}$ & $\{\X\}_{\Ka}$ & $\{\X\}_{\Kb}$ & $\{\X\}_{\JC}$  \\
 \hline
 $\B_0$ & $\{\X\}$ & $\dots \cup \{(\a\t)(\c\g)\X\}$ & $\dots \cup \{(\a\c)(\g\t)\X\}_{\SSM}$ & $\dots$ & $\dots$ \\
 $\B_{\a\g\mid \c\t}$ & $"$ & $"$ & $"$ & $\dots$ & $\dots \cup \{(\a\c)\X\}_{\Kb}$ \\
 $\B_{\a\c\mid \g\t}$ & $"$ & $"$ & $"$ & $\dots \cup \{(\a\g)\X\}_{\Ka}$ & $\dots \cup \{(\a\t)\X\}_{\Kb}$ \\
 $\B_{\a\t\mid \c\g}$ & $"$ & $"$ & $"$ & $\dots \cup \{(\a\g)\X\}_{\Ka}$ & $\dots \cup \{(\a\c)\X\}_{\Kb}$ \\
 $\B\setminus \B_2$ & $"$ & $"$ & $"$ & $\dots \cup \{(\a\g)\X\}_{\Ka}$ & $\dots \cup \{(\a\c)\X\}_{\Kb}\cup \{(\a\t)\X\}_{\Kb}$
 \end{tabular}
\end{center}

\vspace{3mm}

\end{proof}

\begin{rk}\rm The sets of equations of Theorem
\ref{teo_indep_equations} has been successfully used in
\cite{KDGC} for model selection. Although the dimensions of these
linear spaces are exponential in $n$, in practice it is not
necessary to consider the full set of equations, but
only those containing the patterns observed in the data. This
is crucial for the applicability of the method, since  the number of different columns in an alignment is
really small compared to the dimension of these spaces.
\end{rk}

\begin{ex}
As an example, we compute a minimal system of equations for $\SSM$,
$\Ka$, $\Kb$, and $\JC$ in the case of $3$ leaves.

\noindent \emph{Equations for $\LL^{\SSM}$:} $\mathbb{E}^{\SSM}$ is
composed of the following equations:
\begin{eqnarray*}
p_{\a\a\a } = p_{\t\t\t}, \quad \quad p_{\a\a\c } = p_{\t\t\g},
\quad \quad p_{\a\a\g } = p_{\t\t\c},  \quad \quad
p_{\a\a\t } = p_{\t\t\a}, \\
p_{\a\c\a } = p_{\t\g\t}, \quad \quad p_{\a\c\c } = p_{\t\g\g},
\quad \quad p_{\a\c\g } = p_{\t\g\c}, \quad \quad
p_{\a\c\t } = p_{\t\g\a}, \\
p_{\a\g\a } = p_{\t\c\t}, \quad \quad p_{\a\g\c } = p_{\t\c\g},
\quad \quad p_{\a\g\g } = p_{\t\c\c}, \quad \quad
p_{\a\g\t } = p_{\t\c\a}, \\
p_{\a\t\a } = p_{\t\a\t}, \quad \quad p_{\a\t\c } = p_{\t\a\g},
\quad \quad p_{\a\t\g } = p_{\t\a\c}, \quad \quad
p_{\a\t\t } = p_{\t\a\a}, \\
p_{\c\a\a } = p_{\g\t\t},  \quad \quad p_{\c\a\c } = p_{\g\t\g},
\quad \quad p_{\c\a\g } = p_{\g\t\c}, \quad \quad
p_{\c\a\t } = p_{\g\t\a}, \\
p_{\c\c\a } = p_{\g\g\t},  \quad \quad p_{\c\c\c } = p_{\g\g\g},
\quad \quad p_{\c\c\g } = p_{\g\g\c}, \quad \quad
p_{\c\c\t } = p_{\g\g\a}, \\
p_{\c\g\a } = p_{\g\c\t}, \quad \quad p_{\c\g\c } = p_{\g\c\g},
\quad \quad p_{\c\g\g } = p_{\g\c\c}, \quad \quad
p_{\c\g\t } = p_{\g\c\a}, \\
p_{\c\t\a } = p_{\g\a\t}, \quad \quad p_{\c\t\c } = p_{\g\a\g},
\quad \quad p_{\c\t\g } = p_{\g\a\c}, \quad \quad p_{\c\t\t } =
p_{\g\a\a}.
\end{eqnarray*}

\noindent \emph{Equations for $\LL^{\Ka}$:} $\mathbb{E}^{\Ka}$ is formed by $\mathbb{E}^{\SSM}$ and
\begin{eqnarray*}
p_{\a\a\a } = p_{\c\c\c}, \quad \quad p_{\a\a\c } = p_{\c\c\a},
\quad \quad p_{\a\a\g } = p_{\c\c\t},  \quad \quad
p_{\a\a\t } = p_{\c\c\g}, \\
p_{\a\c\a } = p_{\c\a\c}, \quad \quad p_{\a\c\c } = p_{\c\a\a},
\quad \quad p_{\a\c\g } = p_{\c\a\t}, \quad \quad
p_{\a\c\t } = p_{\c\a\g}, \\
p_{\a\g\a } = p_{\c\t\c}, \quad \quad p_{\a\g\c } = p_{\c\t\a},
\quad \quad p_{\a\g\g } = p_{\c\t\t}, \quad \quad
p_{\a\g\t } = p_{\c\t\g}, \\
p_{\a\t\a } = p_{\c\g\c}, \quad \quad p_{\a\t\c } = p_{\c\g\a},
\quad \quad p_{\a\t\g } = p_{\c\g\t}, \quad \quad p_{\a\t\t } =
p_{\c\g\g}.
\end{eqnarray*}

\noindent \emph{Equations for $\LL^{\Kb}$:} $\mathbb{E}^{\Kb}$ is formed by $\mathbb{E}^{\Ka}$ and
\begin{eqnarray*}
p_{\a\a\g} =p_{\g\a\a},  \quad \quad p_{\a\c\g}=p_{\g\c\a},   \quad
\quad
p_{\a\c\t}=p_{\g\c\t},\\
p_{\a\g\a}=p_{\g\a\g},\quad \quad p_{\a\g\c} =p_{\g\a\c},  \quad
\quad p_{\a\g\g}  = p_{\g\a\a}.
\end{eqnarray*}

\noindent \emph{Equations for $\LL^{\JC}$:} $\mathbb{E}^{\JC}$ is formed by $\mathbb{E}^{\Kb}$ and
\begin{eqnarray*}
p_{\a\a\c} =p_{\t\t\c},  \qquad p_{\a\c\a}=p_{\t\c\t},   \qquad
p_{\a\c\c}=p_{\t\c\c},   \qquad
p_{\a\c\g}=p_{\c\a\g}, \qquad p_{\a\c\g}=p_{\t\c\g}.
\end{eqnarray*}
\end{ex}


\subsection*{Identifiability of phylogenetic mixtures}

In this section we study the identifiability of phylogenetic
mixtures. To this end, we use projective algebraic varieties
and techniques from algebraic geometry. It is not our intention to
give the reader a background on these tools, so we refer to the algebraic geometry book
\cite{Harris} and, more
specifically, to \cite{APRS} for the usage of these techniques in
the study of  phylogenetic mixtures.

There is a natural isomorphism between the points  lying in the
hyperplane $H$ considered above, $H=\{p=(p_{\a \ldots \a}, \dots, p_{\t \ldots \t})
\in \LL : \sum p_{\x_1 \dots \x_n} =1 \}$, and the open affine
subset $\{ p=[p_{\a \ldots \a}: \dots : p_{\t\ldots \t}] :\sum
p_{\x_1 \dots \x_n} \neq 0 \}$ of $\PP^{4^n-1}=\PP(\LL)$. We use the notation
$[p_{\a\ldots \a}:\dots :p_{\t\ldots
\t}]$ for projective coordinates (in contrast to $(p_{\a\ldots \a}, \dots, p_{\t\ldots
\t})$ used for affine coordinates).
The \emph{projective phylogenetic variety} $\PP V_{T}^{\MM}$
associated to a phylogenetic tree $T$ is the  projective closure  in
$\PP(\LL)$ of   the image of the stochastic
parameterization $\phi_T^{\MM}$  defined above. That is, it is the smallest projective variety  in $\PP(\LL)$ containing $\im  \, \phi_T^{\MM}$ via the above isomorphism.

In what follows, we explain the relationship between this new
variety and $CV_T^{\MM}$ and $V_T^{\MM}.$
By Remark \ref{homog_parametr}, it becomes clear
that $CV_{T}^{\MM}$ equals the affine cone over the projective
phylogenetic variety $\mathbb{P}V_{T}^{\MM}$ (for the general Markov model, see also
\cite[Proposition 1]{Allman2004b}).
This implies that $\dim{CV_T^{\MM}}=\dim{\PP
V_{T}^{\MM}}+1$, and if $p=(p_{\texttt{A...A}},
\dots, p_{\texttt{T...T}})$ belongs to $CV_T^{\MM}$, then
$q:=[p_{\texttt{A...A}}: \dots : p_{\texttt{T...T}}]$ belongs to
$\PP V_T^{\MM}$. Moreover, if $\lambda:=\sum p_{\x_1 \dots \x_n}$ is not zero, then $(\frac{p_{\texttt{A...A}}}{\lambda},
\dots , \frac{p_{\texttt{T...T}}}{\lambda})$ is a point in the affine stochastic phylogenetic variety $V_T^{\MM}$.

Before defining identifiability of mixtures, we consider the following construction of projective algebraic varieties.

\begin{defi}
Given two projective varieties $X,Y \subset \PP^m$, the
\emph{join of $X$ and $Y$}, $X \vee Y$, is the smallest variety
in $\PP^m$ containing all lines $\overline{xy}$ with $x \in X$,
$y \in Y$, and $x \neq y$ (see \cite[8.1]{Harris} for details). Similarly, one defines the \emph{join of projective
varieties }$X_1, \dots, X_h \subset \PP^m$, $\vee_{i=1}^h X_i$, as
the smallest subvariety in $\PP^m$ containing all the linear varieties
spanned by $x_1, \dots ,x_h$ with $x_i \in X_i$ and  $x_i \neq
x_j$. It is known that
\begin{eqnarray*}
\dim{(\vee_{i=1}^h X_i)}\leq \min{\{ \sum_{i=1}^h
\dim{(X_i)+h-1},m\}}.
\end{eqnarray*}
The right hand side of this inequality is usually known as the
\emph{expected dimension} of $\vee_{i=1}^h X_i$.
\end{defi}

 For instance, if we consider the join
$\vee_{i=1}^h\PP V_{T_i}^{\MM}$ for certain tree topologies $T_i$
on the leaf set $[n]$ and a given evolutionary model $\MM$, then
there is a (dominant rational) map
\begin{eqnarray}\label{rational_map}
\PP V_{T_1}^{\MM} \times \PP V_{T_2}^{\MM} \times \ldots \times \PP V_{T_h}^{\MM} \times \PP^{h-1} \dashrightarrow \vee_{i=1}^h\PP V_{T_i}^{\MM} \subset \PP(\LL),
\end{eqnarray}
which is the projective closure of the parameterization
$\phi_{T_1} \vee \ldots \vee \phi_{T_h}$ defined by
$$\begin{array}{rcl}
Par_{s\MM}(T_1)  \times \ldots \times Par_{s\MM}(T_h) \times
\Omega & \lra & \mathcal{L}
\\
\left (\xi_1, \ldots, \xi_h, \mathbf{a}  \right) & \mapsto &
\sum_j a_i \phi_{T_i}^{\MM}(\xi_i). \end{array}
$$
Here, $\Omega=\{\mathbf{a}=(a_1, \ldots, a_h) \mid \sum_i a_i=1\}$
is isomorphic to an affine open subset of $\PP^{h-1}$ .
In this setting, an $h$-mixture on $\{T_1, \ldots, T_h\}$ corresponds to
a point in the variety $\vee_{i=1}^h\PP V_{T_i}^{\MM}$.
We will use this
algebraic variety to study the identifiability of phylogenetic
mixtures.

When considering unmixed models $\MM$ on trivalent trees on
$n$ taxa,  \textit{generic identifiability of the tree topology} is
equivalent to the projective varieties $\PP V_{T}^{\MM}$ and  $\PP V_{T'}^{\MM}$ being different when  $T\neq T'$ (see \cite{Allman2008}). The identifiability of
the continuous parameters must take into account the possibility
of permuting the labels of the states at the interior nodes, as
such permutations give  rise to the same joint distribution at
the leaves. In the language  of algebraic geometry,
\textit{generic identifiability of the continuous parameters} of the
model implies that the map  $\phi_{T}^{\MM}$ is generically finite
(i.e. the preimage of a generic point is a finite number of points;
see \cite{Allman2008}). In this case, the fiber dimension
Theorem \cite[Theorem
11.12]{Harris} applies and we have that $\dim \PP V_{T}^{\MM}$ is
equal to  the number of stochastic parameters of the model,  $\dim Par_{s\MM}(T)$.
Therefore, if the continuous parameters are generically identifiable
for the unmixed trees under $\MM$, then the dimension of the variety $\PP
V_T^{\MM}$ is the same for all trivalent tree topologies on $n$
taxa. This dimension is denoted by $s_{\MM}(n)$.

\begin{ex}\rm The tree topologies and the continuous parameters are
generically identifiable for the unmixed equivariant models
\texttt{JC69}, \texttt{K80}, \texttt{K81}, \texttt{SSM}, and
\texttt{GMM} on  trees with any number of leaves (see \cite{Allman2006} and \cite[Corollary 3.9]{CFS3}).
\end{ex}

From now on we only consider trees without nodes of degree 2, so that the number of free stochastic parameters on a phylogenetic tree on $n$ taxa under $\MM$ is $\leq s_{\MM}(n).$

We recall the definition of generic identifiability of the tree
topologies on $h$-mixtures (see \cite{APRS}).

\begin{defi} The \emph{tree topologies} on  $h$-mixtures under $\MM$ are
\emph{generically identifiable} if for any set of trivalent tree
topologies $\{T_1, \ldots, T_h\}$ and a generic choice of $(\xi_1,
\ldots \xi_h,\mathbf{a}) \in Par_{s\MM}(T_1)  \times \ldots \times
Par_{s\MM}(T_h) \times \Omega$, the equality
\[\phi_{T_1} \vee \ldots \vee \phi_{T_h}(\xi_1, \ldots \xi_h,\mathbf{a})=\phi_{T_1'} \vee \ldots \vee \phi_{T_h'}(\xi_1', \ldots \xi_h',\mathbf{a}'),\]
for tree topologies $\{T'_1,\ldots,T'_h\}$ and $(\xi'_1, \ldots \xi'_h,\mathbf{a}')\in Par_{s\MM}(T'_1)  \times \ldots \times
Par_{s\MM}(T'_h) \times \Omega$ implies \[\{T_1,
\ldots, T_h \}=\{T_1,' \ldots, T_h'\}.\]In terms of algebraic
varieties this is equivalent to saying that the variety
$\vee_{i=1}^h\PP V_{T_i}^{\MM}$ is not contained in
$\vee_{i=1}^h\PP V_{T_i'}^{\MM}$ and vice versa.

\end{defi}

The tree topologies are the discrete parameters of $h$-mixtures.
When considering the  continuous parameters of
$h$-mixtures, the above mentioned label-swapping can  be
disregarded. We give the following
definition according to \cite{RhodesSullivant}.

\begin{defi}  The \emph{continuous parameters} of $h$-mixtures on $T_1,
\dots, T_h$ under an evolutionary model $\MM$ are
\emph{generically identifiable} if, for a generic choice of
stochastic parameters $(\xi_1, \dots ,\xi_h,\mathbf{a})$, the
equality
$$\phi_{T_1} \vee \ldots \vee \phi_{T_h}(\xi_1, \ldots \xi_h,
\mathbf{a})=\phi_{T_1} \vee \ldots \vee \phi_{T_h}(\xi_1', \ldots
\xi_h', \mathbf{a}')$$ for stochastic parameters $(\xi'_1, \ldots
,\xi'_h,\mathbf{a}')$ implies
that there is a permutation $\sigma \in \mathfrak{S}_h$ such that
${\sigma \cdot (T_1,\dots,T_h)}=(T_1,\dots, T_h)$,
$\xi_i'=\xi_{\sigma(i)}$, and $a_i'=a_{\sigma(i)}$ for $i=1,\dots
r$. In other words, we only allow swapping of the continuous parameters when at least two tree topologies coincide.

%

\end{defi}

\begin{defi}
 An $h$-mixture under a model $\MM$
 is said to be \emph{identifiable} if both its tree topologies
 and its  continuous parameters are generically identifiable.
\end{defi}

In terms of algebraic varieties, generic identifiability of
continuous parameters on $h$-mixtures implies that the generic fibers (i.e. preimages of generic points) of the map $\phi_{T_1} \vee \ldots
\vee \phi_{T_h}$ are finite. In this case, the fiber dimension
theorem applied to (\ref{rational_map}) (cf. \cite[Theorem 11.12]{Harris}) gives
\begin{eqnarray*}
\dim{(\vee_{i=1}^h \PP V_{T_i})}=\sum_{i=1}^h
\dim{(\PP V_{T_i})+h-1}.
\end{eqnarray*}

The following result demonstrates the need for careful
  inspection of identifiability of mixtures with many components
  (i.e. large values of $h$).

\begin{thm}\label{noident}
Let $[n]$ be a set of taxa and $\MM$ be an evolutionary model for
which the  continuous parameters
are generically identifiable on trivalent (unmixed) trees. In
addition,  let
$s_{\MM}(n)$ be the
dimension of $\PP V_T^{\MM}$ for any trivalent tree $T$, and set $h_0(n):=\frac{\dim{\DD_{\MM}}}{s_{\MM}(n)+1}$. Then
the $h$-mixtures of trees on $[n]$ evolving under $\MM$ are not identifiable for $h \geq h_0(n)$.
\end{thm}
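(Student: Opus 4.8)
The plan is to show that for $h \geq h_0(n)$, the join variety $\vee_{i=1}^h \PP V_{T_i}^{\MM}$ fills up essentially all of the available projective space determined by the space of mixtures, so that changing the tree topologies (or the continuous parameters) cannot be detected from the resulting distribution. Concretely, every $h$-mixture lies in $\DD_{\MM}$ by definition, and passing to projective coordinates, the varieties $\PP V_{T_i}^{\MM}$ all lie in $\PP(\DD_{\MM})$, which has dimension $\dim \DD_{\MM} - 1$. The key numerical input is that each trivalent tree contributes a variety of dimension $s_{\MM}(n)$, and the expected dimension of the join of $h$ such varieties is $\sum_{i=1}^h (s_{\MM}(n)+1) - 1 = h\,(s_{\MM}(n)+1) - 1$.

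First I would compute $h_0(n)\,(s_{\MM}(n)+1) = \dim \DD_{\MM} = \dim \PP(\DD_{\MM}) + 1$, so that when $h = h_0(n)$ the expected dimension $h(s_{\MM}(n)+1)-1$ of the join equals $\dim \PP(\DD_{\MM})$, the dimension of the ambient linear space into which every $\PP V_{T_i}^{\MM}$ embeds. Since the join $\vee_{i=1}^h \PP V_{T_i}^{\MM}$ is always contained in $\PP(\DD_{\MM})$, its actual dimension is at most $\dim \PP(\DD_{\MM})$. The heart of the argument is then a dimension-counting contradiction: if the tree topologies were generically identifiable, the fibers of the parametrization $\phi_{T_1} \vee \ldots \vee \phi_{T_h}$ would be finite, forcing the join to have its full expected dimension $h(s_{\MM}(n)+1)-1$; but for $h \geq h_0(n)$ this equals or exceeds $\dim \PP(\DD_{\MM})$, so the join must coincide with the whole ambient space $\PP(\DD_{\MM})$ (or fail to reach the expected dimension). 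In either outcome, two distinct families of trees $\{T_i\}$ and $\{T_i'\}$ both span the same maximal variety, so $\vee_{i=1}^h \PP V_{T_i}^{\MM} = \PP(\DD_{\MM}) = \vee_{i=1}^h \PP V_{T_i'}^{\MM}$, contradicting generic identifiability of the tree topologies.

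I would organize the dichotomy carefully, because identifiability can fail in two distinct ways and the theorem asserts that at least one must fail. If the expected dimension is attained, the continuous parameters are generically identifiable (finite fibers), but then the join fills $\PP(\DD_{\MM})$ and the discrete tree topologies cannot be recovered, since any other choice of $h$ trivalent topologies yields a join contained in the same full space. If the expected dimension is not attained, then the generic fiber of $\phi_{T_1} \vee \ldots \vee \phi_{T_h}$ has positive dimension by the fiber dimension theorem applied to (\ref{rational_map}), so the continuous parameters fail to be generically identifiable. Either way the $h$-mixture is not identifiable in the sense of the definition above.

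The main obstacle will be handling the case $h > h_0(n)$ uniformly and justifying that the join genuinely equals the ambient $\PP(\DD_{\MM})$ rather than merely being contained in it, which requires that the relevant varieties are nondegenerate in $\PP(\DD_{\MM})$ — that is, they are not all trapped in a common proper hyperplane. This is exactly where Theorem~\ref{caractLG} enters: since $\DD_{\MM_G} = \LL^{G}$ is precisely the smallest linear space containing all the $CV_T^{\MM_G}$, there is no hyperplane of $\PP(\DD_{\MM})$ containing every $\PP V_{T}^{\MM}$, so the varieties are nondegenerate and a full-dimensional join must fill the space. For $h > h_0(n)$ one reduces to the threshold case by noting that a larger join contains a smaller one, or by observing that the bound $\dim \vee_{i=1}^h \PP V_{T_i}^{\MM} \leq \dim \PP(\DD_{\MM})$ caps the dimension regardless of how many components are added, so the expected dimension (which grows linearly in $h$) must strictly exceed the actual dimension, forcing positive-dimensional fibers and hence non-identifiability of the continuous parameters.
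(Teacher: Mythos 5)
Your proposal is correct and follows essentially the same route as the paper's proof: bound $\dim(\vee_{i=1}^h \PP V_{T_i}^{\MM})$ by the expected dimension $h(s_{\MM}(n)+1)-1$, note that this is at least $\dim\DD_{\MM}-1$ when $h\geq h_0(n)$, and split into the two cases (dimension deficient, so the generic fiber of $\phi_{T_1}\vee\ldots\vee\phi_{T_h}$ is positive-dimensional and the continuous parameters fail; or dimension equal to $\dim\PP(\DD_{\MM})$, so the join coincides with $\PP(\DD_{\MM})$ and the tree topologies fail). The only cosmetic difference is your appeal to nondegeneracy via Theorem~\ref{caractLG}, which is not needed: containment in $\PP(\DD_{\MM})$ together with equality of dimensions and irreducibility of both varieties already forces the join to be all of $\PP(\DD_{\MM})$, exactly as in the paper.
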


\begin{rk}\rm
 Note that, in the above definition of $h_0(n)$, $\dim \DD_{\MM}$ also depends on $n$.
\end{rk}

\begin{cor}\label{cormodels}
Let $[n]$ be a set of taxa and $\MM$ be one of the equivariant
models \texttt{JC69}, \texttt{K80}, \texttt{K81}, \texttt{SSM}, or
\texttt{GMM}. Then the   phylogenetic $h$-mixtures
under these models are not identifiable for $h\geq h_0(n)$,  where
\begin{eqnarray*}
h_0(n)=\left \{ \begin{array}{lc}
  \frac{4^{n}}{12(2n-3)+4}, \mbox{ if } \MM=\texttt{GMM}, \\ \\
\frac{2^{2n-1}}{6(2n-3)+2}, \mbox{ if } \MM=\texttt{SSM}, \\ \\
\frac{4^{n-1}}{3(2n-3)+1}, \mbox{ if } \MM=\texttt{K81},  \\ \\
\frac{2^{2n-3}+2^{n-2}}{2(2n-3)+1}, \mbox{ if } \MM=\texttt{K80}, \\ \\
\frac{2^{2n-3}+3\cdot 2^{n-2}+1}{3(2n-2)}, \mbox{ if } \MM=\texttt{JC69}.
\end{array}
\right.
\end{eqnarray*}
\end{cor}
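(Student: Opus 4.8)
The plan is to apply Theorem~\ref{noident} directly, since Corollary~\ref{cormodels} is simply its specialization to the five equivariant models once we supply the two ingredients $\dim \DD_{\MM}$ and $s_{\MM}(n)$. First I would invoke the hypotheses of Theorem~\ref{noident}: the quoted Example establishes that the continuous parameters are generically identifiable on trivalent (unmixed) trees for each of \texttt{JC69}, \texttt{K80}, \texttt{K81}, \texttt{SSM}, and \texttt{GMM}, so the theorem applies and gives non-identifiability for $h \geq h_0(n) = \dim \DD_{\MM}/(s_{\MM}(n)+1)$. The work therefore reduces to computing the numerator and denominator explicitly for each model and checking that they match the stated fractions.

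For the numerator, Theorem~\ref{caractLG} identifies $\DD_{\MM_G}$ with $\LL^G$, so $\dim \DD_{\MM}$ is exactly the value computed in Proposition~\ref{dim_models}: $2^{2n-1}$ for \texttt{SSM}, $4^{n-1}$ for \texttt{K81}, $2^{2n-3}+2^{n-2}$ for \texttt{K80}, and $\frac{2^{2n-3}+1}{3}+2^{n-2}$ for \texttt{JC69}. For \texttt{GMM} the group is trivial, so $\LL^G = \LL$ has dimension $4^n$. I would then rewrite the \texttt{JC69} numerator $\frac{2^{2n-3}+1}{3}+2^{n-2}$ in the form $\frac{2^{2n-3}+3\cdot 2^{n-2}+1}{3}$ appearing in the corollary, which is a routine common-denominator manipulation.

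For the denominator I need $s_{\MM}(n)$, the dimension of $\PP V_T^{\MM}$ for a trivalent tree $T$ on $n$ taxa. Under the generic identifiability hypothesis, the fiber dimension theorem (as recalled in the identifiability section) gives $\dim \PP V_T^{\MM} = \dim \rep_{s\MM}(T)$, the number of free stochastic parameters. A trivalent tree on $n$ leaves has $2n-3$ edges, so $s_{\MM}(n)$ equals $(2n-3)$ times the number of free stochastic parameters per edge, plus the free parameters in the root distribution $sW_0$. The hard part is bookkeeping these per-model parameter counts: for \texttt{GMM} each $sMod$ matrix contributes $12$ parameters (a $4\times 4$ stochastic matrix) giving $s_{\MM}(n)+1 = 12(2n-3)+4$; for \texttt{SSM} the count is $6$ per edge yielding $6(2n-3)+2$; for \texttt{K81} it is $3$ per edge giving $3(2n-3)+1$; for \texttt{K80} it is $2$ per edge giving $2(2n-3)+1$; and for \texttt{JC69} it is $1$ per edge, where the denominator $3(2n-2)$ reflects the slightly different accounting forced by the linear invariants of this model. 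I would verify each of these edge-parameter dimensions against the defining subspace $Mod = M(G)$ of each equivariant model and confirm that the root contribution matches $\dim sW_0$. Once these denominators are in hand, substituting both quantities into $h_0(n) = \dim \DD_{\MM}/(s_{\MM}(n)+1)$ reproduces the five formulas, completing the proof. The main obstacle is purely computational: correctly determining $s_{\MM}(n)+1$ for each model, since a miscount of even one free parameter per edge would throw off the denominator by a factor growing linearly in $n$.
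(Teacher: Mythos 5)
Your proposal is correct and follows essentially the same route as the paper: invoke Theorem~\ref{noident}, take $\dim \DD_{\MM}=\dim\LL^{G}$ from Theorem~\ref{caractLG} and Proposition~\ref{dim_models}, and compute $s_{\MM}(n)$ as $(2n-3)$ times the per-edge dimension of $sMod$ plus $\dim sW_0$, which matches the paper's values $s_{\texttt{GMM}}(n)=12(2n-3)+3$, $s_{\texttt{SSM}}(n)=6(2n-3)+1$, $s_{\texttt{K81}}(n)=3(2n-3)$, $s_{\texttt{K80}}(n)=2(2n-3)$, $s_{\texttt{JC69}}(n)=2n-3$. One small slip in your exposition: the factor $3$ in the \texttt{JC69} denominator $3(2n-2)$ has nothing to do with the linear invariants of that model --- it is just the denominator of $\frac{2^{2n-3}+1}{3}+2^{n-2}=\frac{2^{2n-3}+3\cdot 2^{n-2}+1}{3}$ moved downstairs, exactly the common-denominator manipulation you describe earlier, so the argument is unaffected.
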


\begin{proof}
Theorem \ref{caractLG} shows that $\LL^{\MM}=\DD_{\MM}$ and
Proposition \ref{dim_models} gives the dimension of $\LL^{\MM}$ in each case. Next, we calculate: $s_{\texttt{GMM}}(n)=12(2n-3)+3$,
$s_{\texttt{SSM}}(n)=6(2n-3)+1$, $s_{\texttt{K81}}(n)=3(2n-3)$,
$s_{\texttt{K80}}(n)=2(2n-3)$, and $s_{\texttt{JC69}}(n)=2n-3$. Applying Theorem \ref{noident}, we conclude the proof.
\end{proof}

\begin{ex}
\rm Consider the Kimura 3-parameter model \texttt{K81} on $n=4$ taxa. For  any $h \geq 4$,
phylogenetic $h$-mixtures are not identifiable by Corollary
\ref{cormodels}. We are not aware of any result proving that
mixtures of 2 or 3 different tree topologies  under this model are
identifiable (either for the  tree parameters or for the  continuous
parameters).
\end{ex}

\begin{ex}
\rm Consider  the Jukes-Cantor model \texttt{JC69} on $n=4$
taxa. Then  Corollary \ref{cormodels} tells us that for $h \geq 3$,
$h$-mixtures are not identifiable. Therefore, for this particular
model on four taxa the cases in which the identifiability holds are known: the tree and
the continuous parameters are generically identifiable for the unmixed
model; the tree parameters are generically identifiable for
2-mixtures \cite[Theorem 10]{APRS}; the continuous parameters are
generically identifiable for 2-mixtures on different tree
topologies and not identifiable for the same tree topology
\cite[Theorem 23]{APRS}; neither the continuous parameters nor the
tree topologies are generically identifiable for mixtures with more than two components (Corollary \ref{cormodels}).
\end{ex}

\begin{proofof}{Theorem \ref{noident}}
Let $edim(h):=h s_{\MM}(n)+h-1$. Then  the variety $\vee_{i=1}^h \PP
V_{T_i}$ has dimension $\leq edim(h)$. Indeed, as $\vee_i
\phi_{T_i}$ is a parameterization of an open subset of
$\vee_{i=1}^h \PP V_{T_i}$, then the dimension of $\vee_{i=1}^h
\PP V_{T_i}$ is less than or equal to $\sum \dim{\PP V_{T_i}}+h-1$.
Moreover, the dimension of $\PP V_{T_i}$ is equal to $s_{\MM}(n)$ if
$T_i$ is trivalent (since the continuous parameters for the
unmixed models under consideration  are generically identifiable)
and is less than $s_{\MM}(n)$  for non-trivalent
trees. Therefore,
$\dim(\vee_{i=1}^h \PP V_{T_i}) \leq edim(h).$

If all $T_i$ are trivalent trees, 
then $\sum \dim{\PP V_{T_i}}+h-1=edim(h)$ and, therefore,
$\dim(\vee_{i=1}^h \PP V_{T_i}) < edim(h)$ if and only if
$\dim(\vee_{i=1}^h \PP V_{T_i})<\sum \dim{\PP V_{T_i}}+h-1$.
Moreover, by the fiber dimension theorem applied to $\vee \phi_{T_i}$,
the equality of dimensions holds if and only if the generic fiber of $\vee
\phi_{T_i}$ has dimension 0. In particular, if $\dim(\vee_{i=1}^h
\PP V_{T_i}) < edim(h)$,  then the continuous parameters of this
phylogenetic mixture are not identifiable.

As $h_0(n)=\frac{\dim{\DD_{\MM}}}{s_{\MM}(n)+1}$, we have that $edim(h_0(n))=h_0(n)
(s_{\MM}(n)+1)-1=\dim{\DD_{\MM}}-1$. Now,  we fix an $h \in \mathbb{N}$
with $h \geq h_0(n)$, so that  $edim(h) \geq
\dim(\DD_{\MM})-1$.

There are two possible scenarios:

(a) For any set of tree topologies $\{T_1, \dots, T_h\}$, the dimension of
$\vee_{i=1}^h \PP V_{T_i}$ is less than $\dim(\DD_{\MM})-1$.

(b) There exists a set of tree topologies $\{T_1, \dots, T_h\}$
for which $\dim(\vee_{i=1}^h \PP V_{T_i}) = \dim(\DD_{\MM})-1.$

Case (a) implies that the dimension of $\vee_{i=1}^h \PP V_{T_i}$ is less than
$edim(h)$ for any set of trivalent tree topologies
$\{T_1, \dots, T_h\}$. Based on the conclusions drawn above, this implies that the
continuous parameters are not generically identifiable.

In case (b),  $\vee_{i=1}^h \PP V_{T_i}$ coincides with  $\PP(\DD_{\MM})$. Indeed, $\vee_{i=1}^h \PP V_{T_i}$ is contained in
$\PP(\DD_{\MM})$, both varieties are irreducible,  and $\dim(\vee_{i=1}^h \PP V_{T_i}) =
\dim(\DD_{\MM})-1=\dim(\PP(\DD_{\MM}))$,  which implies that both
varieties coincide. In particular,  any
$h$-mixture (which is a point in $\PP(\DD_{\MM})$) would be
contained in $\vee_{i=1}^h \PP V_{T_i}$,  and therefore the
topologies are not generically identifiable.
\end{proofof}

\begin{rk}
\rm The negative result of Theorem \ref{noident} should be
complemented with the following positive result of  Rhodes and
Sullivant in \cite{RhodesSullivant}: if $\MM=$ \texttt{GMM} and
one restricts to $h$-mixtures on the same trivalent tree topology
$T$, then the tree topology and the continuous parameters are
generically identifiable if $h < 4^{\lceil\frac{n}{4}\rceil -1}.$
\end{rk}

\bigskip

\section*{Author's contributions}
    All authors contributed equally and the author names order is alphabetical.

\section*{Acknowledgements}
   All authors are partially supported by Generalitat de Catalunya, 2009 SGR 1284. Research of the first and second authors partially supported by Ministerio de Educaci\'on y Ciencia MTM2009-14163-C02-02. We would like to thank both referees for very useful comments that led to major improvements.



\end{document}